\documentclass[12pt,a4paper]{article}
\usepackage{PRIMEarxiv}
\usepackage[T1]{fontenc}
\usepackage[utf8]{inputenc}
\usepackage{lmodern}
\usepackage{microtype}
\usepackage{graphicx}
\usepackage{geometry}
\usepackage{amsmath,amssymb,amsthm}
\usepackage{booktabs}
\usepackage{tikz}
\usepackage{pgfplots}
\usepackage{xcolor}
\usepackage{float}
\floatstyle{ruled}
\newfloat{algorithm}{tbp}{loa}
\floatname{algorithm}{Algorithm}
\usepackage{hyperref}
\usepackage{needspace}

\usetikzlibrary{arrows.meta,positioning}
\pgfplotsset{compat=1.18}

\definecolor{RTwoBlue}{RGB}{40,110,190}
\definecolor{RTwoPurple}{RGB}{120,80,170}
\definecolor{RTwoOrange}{RGB}{220,130,35}
\definecolor{RTwoGreen}{RGB}{35,145,95}
\definecolor{RTwoGray}{RGB}{100,105,115}

\hypersetup{
  colorlinks=true,
  linkcolor=blue!55!black,
  citecolor=blue!55!black,
  urlcolor=blue!55!black
}

\newcommand{\Rtwo}{R_2}
\newcommand{\argmin}{\operatorname*{arg\,min}}
\newtheorem{lemma}{Lemma}
\newtheorem{theorem}{Theorem}

\title{Exact and Fast Subset Selection Algorithms for the Bi-objective Integral \texorpdfstring{$R_2$}{R2} Indicator}
\author{Michael T. M. Emmerich\\Faculty of Information Technology, University of Jyv\"askyl\"a, Finland\\\href{https://orcid.org/0000-0002-7342-2090}{ORCID: 0000-0002-7342-2090}}
\date{\today}
\AtBeginDocument{\setlength{\headheight}{15pt}}

\begin{document}
\maketitle

\begin{abstract}
We study fixed-cardinality subset selection for the exact integral bi-objective $R_2$ indicator with a uniform continuum of weighted Tchebycheff scalarizing functions. The indicator measures the area under the lower envelope of scalarizing losses over weight space, rather than a finite sample average over weight vectors. For a sorted bi-objective Pareto-front approximation, represented by points ordered by increasing first objective and decreasing second objective, we derive an exact adjacent-neighbor decomposition of this integral objective into boundary terms, unary diagonal corrections, and selected-neighbor transition terms. This yields an exact Bellman dynamic program with $O(kn^2)$ running time for selecting $k$ of $n$ candidate points. We then prove that the transition matrix is Monge. This gives a divide-and-conquer implementation with $O(kn\log n)$ running time and, more strongly, a staircase matrix-search implementation with $O(kn)$ running time under constant-time arithmetic comparisons. The matrix-search proof is presented through a lower-envelope sweep over single-crossing transition functions and includes the triangular feasibility condition $i<j$. The algorithms are exact for the continuous integral $R_2$ setting and are distinct from finite-weight-vector approximations, although they are related to earlier exact and dynamic-programming work on two-dimensional indicator-based subset selection, including hypervolume subset selection. Reproducible Python code compares exhaustive enumeration, the direct left-to-right dynamic program, the divide-and-conquer dynamic program, and the matrix-search implementation under explicit consistency checks.
\end{abstract}
\keywords{multiobjective optimization; biobjective optimization; integral $\Rtwo$ indicator; subset selection; Pareto-front approximation; dynamic programming; Monge arrays; matrix search; Tchebycheff scalarization}

\section{Introduction}

Quality indicators are often used to compare or select finite nondominated approximations of a Pareto front. In two-objective minimization, the candidate archive can be sorted by increasing first objective and decreasing second objective. The increasing--decreasing order is introduced only as an algorithmic representation of the nondominated archive. The algorithmic question considered in this paper is the following: given such an archive and a cardinality budget $k$, which $k$ points should be retained if quality is measured by the \textit{exact integral $R_2$ indicator}?

The classical $R_2$ idea evaluates a set through scalarizing utility or loss functions \cite{hansen1998evaluating}. For a weight $\lambda\in[0,1]$ and a utopian point $z^+$, the bi-objective weighted Tchebycheff loss of a point $a$ is
\[
  g_\lambda(a;z^+)=\max\{\lambda(a_1-z_1^+),(1-\lambda)(a_2-z_2^+)\}. 
\]
A set is evaluated via its best point for each weight. The standard finite-weight implementation averages these best losses over a finite set of weights. The \textit{Integral $R_2$} indicator considered in our work \cite{schaepermeier2024reinvestigating, schaepermeier2025r2v2, jaszkiewicz2025exact} yields a Pareto compliant (rather than weakly Pareto compliant) version of the classical $R_2$ indicator by integrating over the entire weight interval instead of a finite set of weight vectors. After translating $z^+$ to the origin, the relevant envelope is
\[
  r_{\mathcal P}(\lambda)=\min_{a\in\mathcal P} g_\lambda(a;0),
  \qquad
  R_2(\mathcal P)=\int_0^1 r_{\mathcal P}(\lambda)\,d\lambda .
\]
Here the integration variable $\lambda$ moves continuously from full emphasis on the second objective at $\lambda=0$ to full emphasis on the first objective at $\lambda=1$. For each value of $\lambda$, the archive is first reduced to the point with the smallest weighted Tchebycheff loss; only then is this best loss integrated over all weights. Thus $R_2(\mathcal P)$ is the area under the lower envelope of the pointwise losses over the whole preference interval, with respect to the uniform measure $d\lambda$. This is different from numerical quadrature or a finite grid of weights: every weight in $[0,1]$ contributes, and a point contributes only on those subintervals on which its loss curve forms the lower envelope. Since the formulation uses losses, smaller integral area means a better approximation.

\Needspace{6\baselineskip}
Figure~\ref{fig:tchebycheff-shadows} shows the finite-archive viewpoint used in this paper. The Tchebycheff-shadow terminology and visualization are adapted from Emmerich~\cite{tchebEmmerich}. Each point $p_i=(x_i,y_i)$ of a finite Pareto-front approximation casts a Tchebycheff shadow over the weight interval,
\[
  \tau_i(\lambda)=\max\{\lambda x_i,(1-\lambda)y_i\}.
\]
The set value is obtained by taking the lower envelope of these shadows,
\[
  r_{\mathcal P}(\lambda)=\min_i \tau_i(\lambda),
\]
and integrating the shaded area under this envelope. Subset selection can therefore be viewed as retaining only $k$ shadows while preserving the integral envelope as well as possible. This finite-envelope viewpoint is the starting point of the exact bi-objective computation of Sch\"apermeier and Kerschke \cite{schaepermeier2024reinvestigating,schaepermeier2025r2v2}. The Tchebycheff-shadow geometric explanation used here follows the terminology and interpretation in Emmerich~\cite{tchebEmmerich}. Appendix~\ref{app:r2-integral-derivation} gives a detailed expository derivation of the elementary integral calculation behind the adjacent-neighbor formula used below. That appendix is included only to make the paper more self-contained; it adds no new result beyond the integral $R_2$ analysis of Sch\"apermeier and Kerschke and the shadow interpretation just cited. Closely related integral and exact-calculation questions for the $R_2$ indicator, including higher-dimensional settings, are also studied independently by Jaszkiewicz and Zielniewicz \cite{jaszkiewicz2025exact}. The present paper uses the bi-objective integral setting for fixed-cardinality subset selection.

\begin{figure}[t]
\centering
\begin{tikzpicture}
\begin{scope}[xshift=0cm]
\begin{axis}[
  width=0.46\textwidth, height=0.34\textwidth, axis lines=left,
  xmin=0, xmax=0.72, ymin=0, ymax=0.68,
  xlabel={$f_1$}, ylabel={$f_2$}, grid=both,
  grid style={line width=.1pt, draw=gray!18}, major grid style={line width=.2pt, draw=gray!32},
  tick label style={font=\scriptsize}, label style={font=\small}, clip=false]
  \addplot[RTwoGray, thick, densely dashed] coordinates {(0.10,0.60) (0.18,0.43) (0.30,0.30) (0.45,0.18) (0.62,0.10)};
  \draw[RTwoPurple!55, thick] (axis cs:0,0.60) -- (axis cs:0.10,0.60) -- (axis cs:0.10,0);
  \draw[RTwoBlue!55, thick]   (axis cs:0,0.43) -- (axis cs:0.18,0.43) -- (axis cs:0.18,0);
  \draw[RTwoOrange!75, thick] (axis cs:0,0.30) -- (axis cs:0.30,0.30) -- (axis cs:0.30,0);
  \draw[RTwoGreen!65!black, thick] (axis cs:0,0.18) -- (axis cs:0.45,0.18) -- (axis cs:0.45,0);
  \draw[RTwoGray, thick] (axis cs:0,0.10) -- (axis cs:0.62,0.10) -- (axis cs:0.62,0);
  \addplot[only marks, mark=*, mark size=2.1pt, RTwoPurple] coordinates {(0.10,0.60)};
  \addplot[only marks, mark=*, mark size=2.1pt, RTwoBlue] coordinates {(0.18,0.43)};
  \addplot[only marks, mark=*, mark size=2.1pt, RTwoOrange] coordinates {(0.30,0.30)};
  \addplot[only marks, mark=*, mark size=2.1pt, RTwoGreen!70!black] coordinates {(0.45,0.18)};
  \addplot[only marks, mark=*, mark size=2.1pt, RTwoGray] coordinates {(0.62,0.10)};
  \node[font=\scriptsize, fill=white, inner sep=0.8pt, anchor=south west, text=RTwoPurple] at (axis cs:0.105,0.602) {$p_1$};
  \node[font=\scriptsize, fill=white, inner sep=0.8pt, anchor=south west, text=RTwoBlue] at (axis cs:0.185,0.432) {$p_2$};
  \node[font=\scriptsize, fill=white, inner sep=0.8pt, anchor=south west, text=RTwoOrange] at (axis cs:0.305,0.302) {$p_3$};
  \node[font=\scriptsize, fill=white, inner sep=0.8pt, anchor=south west, text=RTwoGreen!70!black] at (axis cs:0.455,0.182) {$p_4$};
  \node[font=\scriptsize, fill=white, inner sep=0.8pt, anchor=south west, text=RTwoGray] at (axis cs:0.625,0.102) {$p_5$};
  \node[font=\scriptsize, fill=white, inner sep=1.2pt, anchor=west] at (axis cs:0.35,0.58) {finite Pareto-front};
  \node[font=\scriptsize, fill=white, inner sep=1.2pt, anchor=west] at (axis cs:0.36,0.53) {approximation};
  \node[font=\scriptsize, anchor=north east] at (axis cs:0,0) {$z^+$};
\end{axis}
\end{scope}
\begin{scope}[xshift=0.51\textwidth]
\begin{axis}[
  width=0.46\textwidth, height=0.34\textwidth, axis lines=left,
  xmin=0, xmax=1, ymin=0, ymax=0.32,
  xlabel={$\lambda$ in $(\lambda,1-\lambda)$}, ylabel={Tchebycheff loss},
  xtick={0,0.2,0.4,0.6,0.8,1.0}, ytick={0,0.08,0.16,0.24,0.32},
  grid=both, grid style={line width=.1pt, draw=gray!18}, major grid style={line width=.2pt, draw=gray!32},
  tick label style={font=\scriptsize}, label style={font=\small}, clip=true]
  \addplot[draw=none, fill=RTwoBlue!18, domain=0:1, samples=300]
    {min(min(min(min(max(0.10*x,0.60*(1-x)),max(0.18*x,0.43*(1-x))),max(0.30*x,0.30*(1-x))),max(0.45*x,0.18*(1-x))),max(0.62*x,0.10*(1-x)))} \closedcycle;
  \addplot[RTwoPurple!70, densely dashed, domain=0:1, samples=100] {max(0.10*x,0.60*(1-x))};
  \addplot[RTwoBlue!70, densely dashed, domain=0:1, samples=100] {max(0.18*x,0.43*(1-x))};
  \addplot[RTwoOrange!85, densely dashed, domain=0:1, samples=100] {max(0.30*x,0.30*(1-x))};
  \addplot[RTwoGreen!70!black, densely dashed, domain=0:1, samples=100] {max(0.45*x,0.18*(1-x))};
  \addplot[RTwoGray, densely dashed, domain=0:1, samples=100] {max(0.62*x,0.10*(1-x))};
  \node[font=\scriptsize, fill=white, inner sep=0.8pt, text=RTwoPurple] at (axis cs:0.52,0.288) {$\tau_1$};
  \node[font=\scriptsize, fill=white, inner sep=0.8pt, text=RTwoBlue] at (axis cs:0.29,0.305) {$\tau_2$};
  \node[font=\scriptsize, fill=white, inner sep=0.8pt, text=RTwoOrange] at (axis cs:0.88,0.270) {$\tau_3$};
  \node[font=\scriptsize, fill=white, inner sep=0.8pt, text=RTwoGreen!70!black] at (axis cs:0.86,0.235) {$\tau_4$};
  \node[font=\scriptsize, fill=white, inner sep=0.8pt, text=RTwoGray] at (axis cs:0.73,0.245) {$\tau_5$};
  \addplot[RTwoBlue, very thick, domain=0:1, samples=300]
    {min(min(min(min(max(0.10*x,0.60*(1-x)),max(0.18*x,0.43*(1-x))),max(0.30*x,0.30*(1-x))),max(0.45*x,0.18*(1-x))),max(0.62*x,0.10*(1-x)))};
  \node[font=\scriptsize, fill=white, inner sep=1.2pt, anchor=west] at (axis cs:0.48,0.265) {individual shadows $\tau_i$};
  \node[font=\scriptsize, fill=white, inner sep=1.2pt, anchor=west] at (axis cs:0.42,0.040) {area = integral $R_2$};
\end{axis}
\end{scope}
\end{tikzpicture}
\caption{Tchebycheff shadows of a finite Pareto-front approximation. Each archive point defines one shadow function over the weight interval; the exact integral $R_2$ value is the shaded area under the lower envelope of these shadows.}
\label{fig:tchebycheff-shadows}
\end{figure}

The subset-selection version of this problem is important for archiving, benchmarking, and postprocessing, and selection in set-oriented and evolutionary search. Archives produced by multiobjective optimizers may contain many nondominated points, while visual inspection, decision-support sessions, or downstream expensive evaluation may require a small representative subset. A finite-weight approximation would turn the problem into a discrete aggregation problem whose outcome depends on the chosen grid of weights. The exact integral criterion removes this discretization parameter in the bi-objective setting and asks for the best cardinality-$k$ subset with respect to the continuous scalarization envelope.

The contributions of this paper are as follows.
\begin{itemize}
\item We derive an exact adjacent-neighbor decomposition of the bi-objective integral $R_2$ value of a sorted selected subset.
\item We obtain a Bellman dynamic program for exact fixed-cardinality subset selection in $O(kn^2)$ time.
\item We prove a Monge property of the transition matrix and use it to obtain a divide-and-conquer dynamic program with $O(kn\log n)$ running time.
\item We sharpen the layer computation to a staircase matrix-search/lower-envelope algorithm with $O(n)$ time per dynamic-programming layer, hence $O(kn)$ total time.
\item We provide reproducible computational experiments comparing exhaustive enumeration, direct dynamic programming, divide-and-conquer dynamic programming, and the matrix-search implementation under explicit consistency checks.
\end{itemize}

The reproducibility material is available from the repository listed in the Code availability section. 

The paper is organized as follows. Section~\ref{sec:related-work} places the result in the context of integral $R_2$ indicators, two-dimensional indicator subset selection, and Monge matrix search. Section~\ref{sec:preliminaries} introduces the exact integral subset-selection problem and derives the adjacent-neighbor decomposition. Section~\ref{sec:algorithms} presents the three exact algorithms in a parallel format: the direct Bellman dynamic program, the divide-and-conquer dynamic program, and the matrix-search dynamic program, each with motivation, pseudocode, correctness, and complexity discussion. Section~\ref{sec:computational-study} reports empirical verification tests and CPU-time experiments. Section~\ref{sec:conclusions} summarizes the results and gives an outlook. Appendix~\ref{app:r2-integral-derivation} gives an expository derivation of the exact bi-objective integral $R_2$ decomposition used in the main text; it adds no new result, but records the elementary calculation behind the formula. Appendix~\ref{app:matrix-search-intro} gives a gentle introduction to Monge matrix search and the staircase lower-envelope viewpoint.

\section{Related work}
\label{sec:related-work}

The $R_2$ indicator originates in the framework of Hansen and Jaszkiewicz for evaluating approximations to the nondominated set under a distribution of utility functions \cite{hansen1998evaluating}. In practice, $R_2$ has often been used through finite weight-vector approximations and in indicator-based search methods, for example, in the $R_2$ indicator-based multiobjective search framework of Brockhoff, Wagner, and Trautmann \cite{brockhoff2015r2search}. Sch\"apermeier and Kerschke recently revisited the indicator from the continuous/integral perspective and showed that, for a uniform continuum of Tchebycheff utility functions, the resulting bi-objective indicator can be computed exactly and is Pareto-compliant \cite{schaepermeier2024reinvestigating,schaepermeier2025r2v2}. Jaszkiewicz and Zielniewicz independently study exact calculation and properties of an integral $R_2$ multiobjective quality indicator, including higher-dimensional cases \cite{jaszkiewicz2025exact}. Together, these works motivate treating integral $R_2$ as an exact continuous indicator rather than as a finite weight-vector average. The present paper uses the bi-objective integral setting as its starting point and studies exact fixed-cardinality subset selection for a sorted Pareto-front approximation.

Dynamic programming and exact algorithms for two-dimensional indicator-based subset selection are not new in themselves. For the hypervolume indicator, Kuhn, Fonseca, Paquete, Ruzika, Duarte, and Figueira studied two-dimensional hypervolume subset selection, including formulations and algorithms for that special geometric setting \cite{kuhn2016hypervolume}. Bringmann, Friedrich, and Klitzke also studied two-dimensional subset selection for the hypervolume and epsilon indicators and gave efficient algorithms for those indicators \cite{bringmann2014twodimensional}. More recently, Korogi and Tanabe proposed polynomial-time dynamic programs for the bi-objective indicator-based subset selection problem with IGD, IGD$^+$, $R_2$, and NR2  under practical assumptions \cite{korogi2025dynamic}. Their conference paper formulates $R_2$ and NR2 in terms of a finite weight-vector set. The contribution of the present paper should therefore be read narrowly: it provides a self-contained derivation, implementation, and Monge acceleration for exact bi-objective integral $R_2$ subset selection, i.e., for the specific Pareto compliant integral version of the $R_2$ indicator formulated in \cite{schaepermeier2024reinvestigating,schaepermeier2025r2v2} and \cite{jaszkiewicz2025exact}.

The structural motivation also parallels the Monge/submodular-lattice approach used in one-dimensional Riesz-energy subset selection \cite{emmerich2026riesz}. In the Riesz-energy case, all selected pairs interact, so an ordinary left-to-right Bellman recurrence is not exact; in the integral $R_2$ case, the objective has only adjacent selected-neighbor terms, and the Bellman recurrence is exact.

Monge arrays and totally monotone matrices
are a classical source of faster dynamic programs. The SMAWK algorithm of Aggarwal, Klawe, Moran, Shor, and Wilber computes row or column minima of a totally monotone matrix in linear time under implicit matrix access \cite{aggarwal1987smawk}. The matrix-search result in Section~\ref{sec:matrix-search} uses the same principle but specializes it to the staircase-shaped feasible domain of the recurrence and presents the algorithm as a lower-envelope sweep over single-crossing transition functions.

\section{Preliminaries and problem setting}
\label{sec:preliminaries}
\label{sec:problem-exposition}

We consider a minimization archive of nondominated points
\[
  p_i=(x_i,y_i),\qquad i=1,\ldots,n,
\]
measured relative to a utopian point that has been translated to the origin. The archive is sorted as
\[
  0<x_1<\cdots <x_n,\qquad y_1>\cdots >y_n>0.
\]

Next, we will introduce the analytic expression of the exact integral $R_2$. For a detailed geometrical motivation, we refer to the original papers. For a selected subset $S\subseteq\{1,\ldots,n\}$, the exact integral $R_2$ value for the uniform bi-objective weighted Tchebycheff family is
\begin{equation}
  \Rtwo(S)
  =\int_0^1 \min_{i\in S}\max\{w x_i,(1-w)y_i\}\,dw.
  \label{eq:r2-integral}
\end{equation}
This is the continuous-utility version studied in the bi-objective case by Sch\"apermeier and Kerschke \cite{schaepermeier2024reinvestigating,schaepermeier2025r2v2} and, from a broader exact-calculation perspective including higher dimensions, by Jaszkiewicz and Zielniewicz \cite{jaszkiewicz2025exact}. The present paper focuses on fixed-cardinality subset selection for the bi-objective exact integral objective, rather than on a finite weight-vector approximation.

The subset-selection problem treated here is:
\[
  \text{minimize } \Rtwo(S) \quad\text{subject to}\quad |S|=k,
\]
where $k$ is fixed in advance. Lower values are better. The input order is important: we restrict to one sorted Pareto-front approximation, not to arbitrary unordered two-dimensional point sets.

\subsection{Exact adjacent-neighbor decomposition}
\label{sec:adjacent-decomposition}

Let
\[
  S=(i_1<i_2<\cdots<i_k)
\]
be the selected index vector. Define the corner interaction
\begin{equation}
  A_{ij}=\frac{x_j y_i}{2(x_j+y_i)}.
  \label{eq:transition}
\end{equation}
The diagonal value $A_{ii}$ is the triangular correction associated with the kink of the single point $p_i$.

For two consecutive selected points $i<j$, the relevant switch of the lower Tchebycheff envelope is determined by the intersection of the increasing branch $w x_j$ of the right point and the decreasing branch $(1-w)y_i$ of the left point. This occurs at
\[
  w_{ij}=\frac{y_i}{x_j+y_i},
  \qquad
  w_{ij}x_j=(1-w_{ij})y_i=\frac{x_jy_i}{x_j+y_i}.
\]
Integrating the piecewise-linear lower envelope (cf. Figure~\ref{fig:tchebycheff-shadows}) and collecting the terms gives the following formula; Appendix~\ref{app:r2-integral-derivation} spells out this elementary calculation in detail.
\begin{equation}
\boxed{
  \Rtwo(i_1,\ldots,i_k)
  =\frac{x_{i_1}}{2}
   +\sum_{r=1}^{k-1} A_{i_r i_{r+1}}
   +\frac{y_{i_k}}{2}
   -\sum_{r=1}^k A_{i_r i_r}.
}
\label{eq:adjacent-decomposition}
\end{equation}
Thus the exact integral $R_2$ subset objective is an ordered-subset functional: there are left and right boundary terms, unary diagonal corrections, and adjacent selected-neighbor transitions. This adjacent-neighbor decomposition is the key reason why Bellman dynamic programming works here.

\section{Exact algorithms}
\label{sec:algorithms}
This section presents the three exact dynamic-programming algorithms used in the paper. All three solve the same recurrence derived from the adjacent-neighbor decomposition. They differ only in how the predecessor minimization in one dynamic-programming layer is performed. Throughout the section, let
\[
  A_{ij}=\frac{x_jy_i}{2(x_j+y_i)},\qquad B_j=A_{jj},
\]
and let $P[r,j]$ denote the predecessor stored for backtracking.

\subsection{Direct Bellman dynamic programming}
\label{sec:bellman-dp}
\label{sec:direct-dp}
The adjacent-neighbor decomposition gives an immediate exact algorithm: build the selected sequence from left to right, and remember only the last selected point. This gives the following Bellman recurrence.

Let $D_r(j)$ denote the best partial value of a selected subsequence of exactly $r$ points that ends in point $j$, not yet including the final right boundary term $y_j/2$. From \eqref{eq:adjacent-decomposition}, the initialization is
\begin{equation}
  D_1(j)=\frac{x_j}{2}-A_{jj},
  \qquad j=1,\ldots,n.
\end{equation}
For $r=2,\ldots,k$ and $j=r,\ldots,n$, the Bellman recurrence is
\begin{equation}
  D_r(j)
  = -A_{jj}+\min_{i<j}\bigl\{D_{r-1}(i)+A_{ij}\bigr\}.
  \label{eq:dp}
\end{equation}
Finally,
\begin{equation}
  \Rtwo^*(k)=\min_{j\ge k}\left\{D_k(j)+\frac{y_j}{2}\right\}.
  \label{eq:dp-final}
\end{equation}
Storing the minimizing predecessor in \eqref{eq:dp} gives the selected subset by ordinary backtracking.

\begin{algorithm}[H]
\caption{Direct Bellman dynamic program for exact integral $\Rtwo$ subset selection}
\label{alg:direct-lrdp}
\small
\begin{tabbing}
\qquad\=\qquad\=\qquad\=\qquad\=\qquad\=\qquad\=\kill
\textbf{Direct-LRDP}$(p_1,\ldots,p_n,k)$\\
\>\textbf{for} $j=1,\ldots,n$ \textbf{do}\\
\>\> $D_1(j)\leftarrow x_j/2-B_j$\\
\>\textbf{for} $r=2,\ldots,k$ \textbf{do}\\
\>\>\textbf{for} $j=r,\ldots,n$ \textbf{do}\\
\>\>\> $D_r(j)\leftarrow +\infty$\\
\>\>\>\textbf{for} $i=r-1,\ldots,j-1$ \textbf{do}\\
\>\>\>\> $v\leftarrow D_{r-1}(i)+A_{ij}-B_j$\\
\>\>\>\>\textbf{if} $v<D_r(j)$ \textbf{then} $D_r(j)\leftarrow v$, $P[r,j]\leftarrow i$\\
\> $j^\star\leftarrow \arg\min_{j\ge k}\{D_k(j)+y_j/2\}$, with leftmost tie-breaking\\
\>\textbf{return} value $D_k(j^\star)+y_{j^\star}/2$ and the backtracked indices
\end{tabbing}
\end{algorithm}

\paragraph{Walkthrough.}
The first layer chooses one endpoint $j$ and pays the left boundary term and the diagonal correction. Layer $r$ constructs all best subsequences of cardinality $r$ ending at $j$ by trying every feasible predecessor $i<j$. The final step adds the right boundary term $y_j/2$ and selects the best last endpoint. This is the most transparent implementation of the recurrence and is useful as a correctness reference, but it scans $O(n^2)$ predecessor--endpoint pairs per layer.

\begin{theorem}[Correctness of the direct Bellman dynamic program]
Algorithm~\ref{alg:direct-lrdp} returns an optimum cardinality-$k$ subset for the exact integral $\Rtwo$ objective.
\end{theorem}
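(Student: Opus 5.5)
The proof takes the adjacent-neighbor decomposition \eqref{eq:adjacent-decomposition} as given and shows that the table $D_r(j)$ computed by Algorithm~\ref{alg:direct-lrdp} equals
\[
  F_r(j)=\min\Bigl\{\tfrac{x_{i_1}}{2}+\sum_{s=1}^{r-1}A_{i_s i_{s+1}}-\sum_{s=1}^{r}A_{i_s i_s}\;:\; i_1<\cdots<i_r=j\Bigr\},
\]
the optimal partial value over all strictly increasing index sequences of length $r$ ending at $j$. We define $F_r(j)=+\infty$ when no such sequence exists, i.e.\ for $j<r$. Since every $k$-subset of $\{1,\ldots,n\}$ corresponds bijectively to an increasing index vector $(i_1<\cdots<i_k)$, \eqref{eq:adjacent-decomposition} gives $\Rtwo(S)=F$-type partial value $+\,y_{i_k}/2$. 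It follows that
\[
  \min_{|S|=k}\Rtwo(S)=\min_{j\ge k}\{F_k(j)+y_j/2\}.
\]

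We prove $D_r(j)=F_r(j)$ for $j\ge r$ by induction on $r$.

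For $r=1$, the only sequence is $(j)$, and its value $x_j/2-A_{jj}$ is exactly the initialization.

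For the step $r-1\to r$, we split any admissible sequence ending at $j$ into its prefix ending at $i=i_{r-1}<j$ and the last index $j$. The objective is additive over this split: it equals the prefix value, plus the transition $A_{ij}$, plus the diagonal term $-A_{jj}$. Only the last prefix index enters the new terms, which is the optimal-substructure property. Minimizing first over prefixes with fixed $i$ and then over $i$ gives
\[
  F_r(j)=-A_{jj}+\min_{i<j}\{F_{r-1}(i)+A_{ij}\}.
\]
By the induction hypothesis this is exactly the inner loop's computation. The range $i\ge r-1$ in the loop loses nothing, because $F_{r-1}(i)=+\infty$ for $i<r-1$.

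Backtracking recovers an optimal sequence. By induction, the stored predecessor $P[r,j]$ attains the minimum, so following the pointers from $j^\star$ reconstructs a sequence whose value equals $D_k(j^\star)+y_{j^\star}/2$. That sequence is a feasible $k$-subset, and by the first paragraph it is optimal. Tie-breaking affects only which optimum is returned, not the optimal value.

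The main subtlety is the bookkeeping for feasibility and boundary terms. The left term $x_{i_1}/2$ must be charged only at layer $1$ and the right term $y_{i_k}/2$ only at the end. Each diagonal correction must be charged exactly once per selected point, which the recurrence does when that point is appended. We also need the index ranges $j\ge r$ to guarantee that every layer-$r$ entry corresponds to an existing sequence; this holds because $k\le n$. All of the genuine geometric content sits in \eqref{eq:adjacent-decomposition}, which we assume.
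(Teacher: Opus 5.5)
Your proof is correct and follows the same route as the paper: induction on the cardinality $r$, using the adjacent-neighbor decomposition so that appending $j$ to a partial sequence ending at $i<j$ adds exactly $A_{ij}-A_{jj}$, then adding $y_j/2$ in the final minimization and backtracking. Your explicit $F_r(j)$ with the $+\infty$ convention for $j<r$, and your check that the backtracked pointers form a feasible increasing $k$-sequence, spell out bookkeeping that the paper leaves implicit.
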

\begin{proof}
The proof is by induction over the selected cardinality $r$. For $r=1$, the value $D_1(j)=x_j/2-A_{jj}$ is exactly the left boundary term and diagonal correction for the one-point partial sequence ending at $j$, without the final right boundary term. Assume that $D_{r-1}(i)$ is the optimum value of every feasible partial sequence of cardinality $r-1$ ending at $i$. By the adjacent-neighbor decomposition, extending such a sequence by endpoint $j>i$ adds precisely the transition $A_{ij}$ and the new diagonal correction $-A_{jj}$. Taking the minimum over all feasible predecessors therefore gives the best cardinality-$r$ partial sequence ending at $j$. The final minimization adds the missing right boundary term $y_j/2$, so backtracking the stored predecessors yields an optimum selected sequence.
\end{proof}

\paragraph{Complexity.}
The direct implementation has $k$ layers and $O(n^2)$ transitions per layer, hence time complexity $O(kn^2)$ and memory $O(kn)$ with backpointers. A memory-reduced value-only version uses $O(n)$ memory.

\subsection{Monge transition structure}
\label{sec:monge}

Further DP speed-ups use Monge structure, which we discuss next.
Note that no objective normalization to a common range is required for the Monge argument below. What is needed is positivity after translation relative to the chosen utopian point and the sorted Pareto-front representation. Positive coordinate rescalings, for example $x\mapsto \alpha x$ and $y\mapsto \beta y$ with $\alpha,\beta>0$, change the numerical value of the indicator but preserve the Monge property of the transition matrix.

The transition cost \eqref{eq:transition} is generated by
\[
  \phi(x,y)=\frac{xy}{2(x+y)}.
\]
A direct calculation gives
\[
  \frac{\partial^2 \phi}{\partial x\,\partial y}
  =\frac{xy}{(x+y)^3}>0.
\]
Thus $\phi$ has increasing differences on the positive quadrant. The sign of this mixed derivative is the analytic property needed for the Monge proof. Consequently, positive multiplicative coordinate rescalings preserve the argument: replacing $\phi(x,y)$ by $\phi(\alpha x,\beta y)$ with $\alpha,\beta>0$ keeps the mixed derivative positive. Note that scaling may be useful to express preferences, but it is not required for the Monge condition.

Since the front order has $x_j<x_{j'}$ but $y_i>y_{i'}$ whenever $i<i'$ and $j<j'$, the transition matrix satisfies
\begin{equation}
  A_{ij}+A_{i'j'}\le A_{ij'}+A_{i'j},
  \qquad i<i',\; j<j'.
  \label{eq:monge}
\end{equation}
This is the Monge inequality in the index order used by the archive.

Consequently, the adjacent transition term
\[
  (i_r,i_{r+1})\mapsto A_{i_r i_{r+1}}
\]
is submodular on the two-coordinate componentwise lattice of increasing index vectors. Summing over adjacent ranks and adding unary and boundary terms preserves submodularity. Therefore, the fixed-cardinality exact integral $R_2$ subset problem also admits a submodular-lattice interpretation. This mirrors the Monge-to-lattice-to-mincut strategy in the one-dimensional Riesz-energy subset-selection work of Emmerich \cite{emmerich2026riesz}; there the full Riesz energy contains all selected-pair interactions, so a simple left-to-right Bellman principle is not exact. Here, the exact $R_2$ decomposition is local along the selected subsequence, so the Bellman recurrence is the most direct polynomial-time algorithm.

\label{sec:faster-algorithms}

The recurrence \eqref{eq:dp} can be accelerated because the transition costs form a Monge matrix. We first record the standard monotone-predecessor consequence used by the divide-and-conquer implementation. For a fixed layer $r$, define the implicit transition matrix
\begin{equation}
  M^{(r)}_{ij}=D_{r-1}(i)+A_{ij},
  \qquad i<j.
  \label{eq:layer-matrix}
\end{equation}
Adding the predecessor value $D_{r-1}(i)$ to row $i$ does not change any Monge difference, so $M^{(r)}$ satisfies the same inequality as $A$. Therefore, if ties are resolved by choosing the smallest predecessor index, the minimizer
\[
  \operatorname{opt}_r(j)
  =\argmin_{i<j}\{D_{r-1}(i)+A_{ij}\}
\]
is nondecreasing in $j$. Indeed, if $j<j'$ and two candidate predecessors satisfy $i<i'$, the Monge inequality gives
\[
  M^{(r)}_{ij}+M^{(r)}_{i'j'}
  \le
  M^{(r)}_{ij'}+M^{(r)}_{i'j}.
\]
This is the standard exchange inequality behind monotone matrix searching: a later column cannot force the leftmost optimum to move backwards.

\subsection{Divide-and-conquer dynamic programming}
\label{sec:divide-conquer}

The monotonicity of $\operatorname{opt}_r(j)$ gives the usual divide-and-conquer computation of one DP layer. To compute $D_r(j)$ for $j$ in an interval $[L,R]$, take the midpoint $m$, scan only the current admissible predecessor range $[a,b]$ to find $\operatorname{opt}_r(m)$, and recurse on
\[
  [L,m-1]\times [a,\operatorname{opt}_r(m)]
  \quad\text{and}\quad
  [m+1,R]\times [\operatorname{opt}_r(m),b].
\]
The triangular condition $i<j$ is handled by replacing the upper scan limit with $\min\{b,m-1\}$. This gives an $O(n\log n)$ worst-case bound for one layer with this simple recursive search. Since there are $k$ selected-cardinality layers, the full divide-and-conquer dynamic program has time bound
\[
  O(kn\log n).
\]
The reference implementation in the reproducibility repository includes both the direct $O(kn^2)$ method and this divide-and-conquer version, and checks that they return the same selected subset and value.

\begin{algorithm}[H]
\caption{Divide-and-conquer dynamic program using monotone predecessors}
\label{alg:divide-conquer-dp}
\small
\begin{tabbing}
\qquad\=\qquad\=\qquad\=\qquad\=\qquad\=\qquad\=\kill
\textbf{DCDP-Layer}$(r,L,R,a,b)$\\
\>\textbf{if} $L>R$ \textbf{return}\\
\> $m\leftarrow \lfloor(L+R)/2\rfloor$\\
\> $u\leftarrow \min\{b,m-1\}$\\
\> find the leftmost $i^\star\in\{a,\ldots,u\}$ minimizing $D_{r-1}(i)+A_{im}-B_m$\\
\> $D_r(m)\leftarrow D_{r-1}(i^\star)+A_{i^\star m}-B_m$, $P[r,m]\leftarrow i^\star$\\
\> \textbf{DCDP-Layer}$(r,L,m-1,a,i^\star)$\\
\> \textbf{DCDP-Layer}$(r,m+1,R,i^\star,b)$\\[1mm]
\textbf{Divide-Conquer-DP}$(p_1,\ldots,p_n,k)$\\
\> initialize $D_1(j)\leftarrow x_j/2-B_j$ for $j=1,\ldots,n$\\
\>\textbf{for} $r=2,\ldots,k$ \textbf{do}\\
\>\> call \textbf{DCDP-Layer}$(r,r,n,r-1,n-1)$\\
\> finish by minimizing $D_k(j)+y_j/2$ and backtracking
\end{tabbing}
\end{algorithm}

\paragraph{Walkthrough.}
The Monge property implies that the leftmost optimal predecessor is nondecreasing as the endpoint $j$ moves from left to right. The divide-and-conquer algorithm evaluates the middle endpoint of an interval first. Once the optimal predecessor for the middle endpoint is known, all endpoints to the left need only search predecessors no larger than that value, and all endpoints to the right need only search predecessors no smaller than that value. This recursively shrinks the search ranges. The recurrence values are exactly the same as in the direct Bellman dynamic program; only the order and range of predecessor scans change.

\begin{theorem}[Correctness and complexity of divide-and-conquer DP]
The divide-and-conquer dynamic program computes the same values as the direct Bellman recurrence. Its running time is $O(kn\log n)$ and its memory usage is $O(kn)$ with backpointers, or $O(n)$ for values only.
\end{theorem}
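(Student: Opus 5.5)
Correctness reduces to one claim: for each layer $r$, every call \textbf{DCDP-Layer}$(r,L,R,a,b)$ reached from the top-level call $(r,r,n,r-1,n-1)$ satisfies a range invariant. Namely, for every $j\in[L,R]$, the leftmost minimizer $\operatorname{opt}_r(j)$ of $D_{r-1}(i)+A_{ij}$ over the feasible predecessors $r-1\le i\le j-1$ lies in $[a,\min\{b,j-1\}]$.

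At the top level the invariant holds trivially, since $[r-1,n-1]$ contains all feasible predecessors. For the inductive step, let $m$ be the midpoint and $i^\star=\operatorname{opt}_r(m)$, which the scan over $[a,\min\{b,m-1\}]$ finds correctly by the invariant.

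\begin{itemize}
\item For $j<m$, the monotonicity of $\operatorname{opt}_r$ derived from the Monge inequality \eqref{eq:monge} for $M^{(r)}$ gives $\operatorname{opt}_r(j)\le i^\star$. So the left child's range $[a,i^\star]$ is valid.
\item For $j>m$, the same monotonicity gives $\operatorname{opt}_r(j)\ge i^\star$. So $[i^\star,b]$ is valid.
\end{itemize}

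The one point needing care is the triangular domain $i<j$. The exchange argument must only compare entries that exist: for $i<i'$ and $j<j'$ the entries used are $(i,j),(i',j'),(i,j'),(i',j)$. If $i'\ge j$, the candidate $i'$ is infeasible for column $j$ anyway, so the leftmost optimum of column $j$ is automatically below $i'$. I would spell this out by treating missing entries as $+\infty$, which preserves the inequality in the needed direction, i.e. a staircase Monge matrix.

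Given the invariant, each midpoint value $D_r(m)$ equals the value in \eqref{eq:dp}, including the same leftmost predecessor, and every $j\in[r,n]$ is a midpoint exactly once. Hence layer $r$ coincides with the direct Bellman layer. Induction over $r$ together with the correctness theorem of Algorithm~\ref{alg:direct-lrdp} then gives identical final values and an optimal subset.

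For complexity, consider the recursion tree of one layer. It has depth $O(\log n)$, because the interval $[L,R]$ halves at each level. At a fixed depth the predecessor ranges $[a,b]$ of the nodes overlap only at shared endpoints $i^\star$, so their total length is at most $n$ plus the number of nodes, i.e. $O(n)$ per level. Each $A_{ij}$ is evaluated in $O(1)$. Thus one layer costs $O(n\log n)$ and $k$ layers cost $O(kn\log n)$, plus $O(n)$ for the final minimization and $O(k)$ for backtracking.

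Memory is the $k\times n$ backpointer table $P$ plus two value arrays of length $n$ and a recursion stack of depth $O(\log n)$. This gives $O(kn)$ overall, or $O(n)$ if only values are kept.

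The main obstacle is the triangular-feasibility subtlety: justifying that the monotonicity of leftmost minimizers survives truncating the scan to $\min\{b,m-1\}$. I would handle it with the $+\infty$-padding argument above, checking that the padded matrix still satisfies the Monge inequality for all quadruples used in the exchange.
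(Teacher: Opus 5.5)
Your proposal is correct and follows the paper's argument. Monotone leftmost predecessors, obtained from the Monge inequality with leftmost tie-breaking, justify the range restrictions, and $O(n)$ scanned pairs per level over $O(\log n)$ levels give $O(n\log n)$ per layer; your explicit range invariant and per-level overlap count simply make rigorous what the paper states briefly, and the $+\infty$ padding is harmless but unnecessary, since the exchange quadruple with rows $\operatorname{opt}_r(j')<\operatorname{opt}_r(j)$ and columns $j<j'$ consists only of feasible entries.
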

\begin{proof}
The Monge inequality and leftmost tie-breaking imply that the optimal predecessor index is nondecreasing in the endpoint $j$. Therefore, after the midpoint of a column interval has been solved, the predecessor search range for the left recursive subproblem can be restricted to predecessors no larger than the midpoint optimum, and the search range for the right recursive subproblem can be restricted to predecessors no smaller than it. These restrictions do not remove any optimal predecessor, so the values are identical to those of the direct recurrence. In one layer, each recursion level scans $O(n)$ candidate predecessor--endpoint pairs in total, and there are $O(\log n)$ levels. Thus one layer costs $O(n\log n)$, and $k$ layers cost $O(kn\log n)$.
\end{proof}

\subsection{Matrix-search dynamic programming}
\label{sec:matrix-search}
The divide-and-conquer method uses only the monotonicity of the minimizing predecessor. The same Monge structure gives a stronger result: each dynamic-programming layer can be computed in linear time. This can be viewed as a staircase version of monotone matrix search. Classical matrix-search algorithms, including SMAWK, compute minima of totally monotone matrices in linear time under implicit matrix access \cite{aggarwal1987smawk}. Here the feasible matrix is not rectangular, because column $j$ only permits predecessors $i<j$. We therefore give a direct lower-envelope version of the matrix search, specialized to the analytic transition functions of integral $R_2$.

\subsubsection{Preconditions}
For a fixed layer $r$, recall the implicit transition matrix
\[
  M^{(r)}_{ij}=D_{r-1}(i)+A_{ij}.
\]
The feasible domain is the staircase
\[
  \Omega_r=\{(i,j): r-1\le i<j\le n\}.
\]
Thus the feasible rows of column $j$ form the prefix $r-1,\ldots,j-1$. The matrix-search argument uses four preconditions: entries can be compared in constant time; feasible rows are nested prefixes; every feasible $2\times2$ submatrix satisfies the Monge inequality; and ties are resolved by the smallest predecessor index. Padding infeasible entries by $+\infty$ is not the right formal model, because it may introduce artificial ties. The correct object is a staircase partial matrix.

\subsubsection{Algorithm description}
For a fixed predecessor $i$, define
\begin{equation}
  f_i(x)=D_{r-1}(i)+\frac{x y_i}{2(x+y_i)}.
  \label{eq:matrix-search-curve}
\end{equation}
Then the transition part of the Bellman recurrence is
\begin{equation}
  \min_{i<j}\{D_{r-1}(i)+A_{ij}\}
  =
  \min_{i<j} f_i(x_j).
  \label{eq:matrix-search-query}
\end{equation}
The layer algorithm sweeps $j=r,r+1,\ldots,n$ from left to right. Before answering the query at $x_j$, it inserts the newly feasible predecessor $i=j-1$. The active predecessors are stored as the lower envelope of the functions $f_i$. The envelope is represented by a stack of rows together with crossing thresholds. A pointer moves forward along this stack as the query abscissae $x_j$ increase.

For two rows $i<i'$, set $\Delta=D_{r-1}(i')-D_{r-1}(i)$. Since $y_i>y_{i'}$, the crossing at which the later row $i'$ becomes as good as row $i$ is determined by
\begin{equation}
  \frac{x^2(y_i-y_{i'})}{2(x+y_i)(x+y_{i'})}=\Delta .
  \label{eq:crossing-equation}
\end{equation}
If $\Delta\le0$, the later row is already no worse at the origin and the crossing threshold is $0$. If $\Delta\ge (y_i-y_{i'})/2$, the later row never becomes strictly better for finite $x$. Otherwise, \eqref{eq:crossing-equation} has a unique positive solution, obtained from a quadratic equation. At the crossing itself the older row is kept, matching leftmost tie-breaking.

When a new row is inserted, its crossing with the last row on the stack is computed. If the new row never becomes better, it is discarded. If it becomes better before or at the point where the last row would start its own interval of strict optimality, the last row has no remaining open interval on the envelope and is popped. The test is repeated until the new row can be pushed or discarded. At a query $x_j$, the pointer is advanced while the next envelope row is strictly better at $x_j$. The current row is then the minimizing predecessor for column $j$.

\begin{algorithm}[H]
\caption{Matrix-search dynamic program by lower-envelope sweep}
\label{alg:matrix-search-dp}
\small
\begin{tabbing}
\qquad\=\qquad\=\qquad\=\qquad\=\qquad\=\qquad\=\kill
\textbf{Cross}$(i,h)$\\
\> return the first abscissa $x$ at which row $h$ is strictly better than row $i$\\
\> for $f_i(x)=D_{r-1}(i)+xy_i/(2(x+y_i))$; return $+\infty$ if this never occurs\\[1mm]
\textbf{Envelope-Layer}$(r)$\\
\> initialize an empty stack of pairs $(i,\alpha)$, where row $i$ is active from $x=\alpha$\\
\> initialize the query pointer to the first stack element\\
\>\textbf{for} $j=r,\ldots,n$ \textbf{do}\\
\>\> $h\leftarrow j-1$ \hfill\textit{new predecessor row entering the staircase}\\
\>\> $\alpha\leftarrow -\infty$\\
\>\>\textbf{while} stack is nonempty \textbf{do}\\
\>\>\> let $(i,\beta)$ be the last stack element\\
\>\>\> $\alpha\leftarrow \textbf{Cross}(i,h)$\\
\>\>\>\textbf{if} $\alpha\le\beta$ \textbf{then} pop the last stack element \textbf{else break}\\
\>\>\textbf{if} stack is empty \textbf{then} push $(h,-\infty)$\\
\>\>\textbf{else if} $\alpha<+\infty$ \textbf{then} push $(h,\alpha)$\\
\>\> advance the query pointer while the next stack interval starts at or before $x_j$\\
\>\> let $i^\star$ be the row at the query pointer\\
\>\> $D_r(j)\leftarrow D_{r-1}(i^\star)+A_{i^\star j}-B_j$, $P[r,j]\leftarrow i^\star$\\[1mm]
\textbf{Matrix-Search-DP}$(p_1,\ldots,p_n,k)$\\
\> initialize $D_1(j)\leftarrow x_j/2-B_j$ for $j=1,\ldots,n$\\
\>\textbf{for} $r=2,\ldots,k$ \textbf{do} call \textbf{Envelope-Layer}$(r)$\\
\> finish by minimizing $D_k(j)+y_j/2$ and backtracking
\end{tabbing}
\end{algorithm}

\paragraph{Walkthrough.}
In one layer, every feasible predecessor $i$ defines a curve
\[
  f_i(x)=D_{r-1}(i)+\frac{xy_i}{2(x+y_i)}.
\]
Endpoint $j$ queries the lower envelope of the active curves at $x_j$. As $j$ increases, one new predecessor row $h=j-1$ enters the feasible set. Because any two curves cross at most once, the lower envelope can be stored as an ordered stack of rows together with the abscissa from which each row becomes active. When a new row enters, it may destroy a suffix of the old envelope; those rows are popped. It is then inserted at its first strict crossing point with the surviving last row. Since both row insertion and query positions move from left to right, each row is pushed once and popped at most once, and the query pointer never moves backwards. Thus one layer takes linear time while computing the same Bellman values as the direct recurrence.

\subsubsection{Correctness}
The proof combines total monotonicity with the single-crossing structure of the functions \eqref{eq:matrix-search-curve}.

\begin{lemma}[Staircase total monotonicity]
For a fixed layer $r$, consider rows $i<i'$ and columns $j<j'$ such that all four entries belong to $\Omega_r$. If
\[
  M^{(r)}_{i'j}\le M^{(r)}_{ij},
\]
then
\[
  M^{(r)}_{i'j'}\le M^{(r)}_{ij'}.
\]
\end{lemma}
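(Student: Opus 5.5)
The plan is to derive the implication directly from the Monge inequality \eqref{eq:monge}, transferred to $M^{(r)}$ as explained in Section~\ref{sec:faster-algorithms}. The only care needed is to check that every entry we use lies in the staircase $\Omega_r$.

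\textbf{Step 1: feasibility of the four entries.} The hypothesis already places all four entries $(i,j),(i,j'),(i',j),(i',j')$ in $\Omega_r$. The key observation is that $(i',j)\in\Omega_r$ means $i'<j$, and then $i<i'<j<j'$. So all four entries are genuinely feasible and finite, with no padding by $+\infty$ and no artificial ties. The nested-prefix structure of feasible rows ensures the lemma is never asked about a configuration in which $(i',j)$ is infeasible while $(i',j')$ is feasible. This is exactly the case where a rectangular argument would break, and the hypothesis excludes it.

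\textbf{Step 2: the Monge inequality for $M^{(r)}$.} We have
\[
M^{(r)}_{ij}+M^{(r)}_{i'j'}-M^{(r)}_{ij'}-M^{(r)}_{i'j}=A_{ij}+A_{i'j'}-A_{ij'}-A_{i'j},
\]
because the row offsets $D_{r-1}(i)$ and $D_{r-1}(i')$ cancel. By \eqref{eq:monge} this quantity is $\le 0$.

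To be self-contained, I would briefly justify \eqref{eq:monge} from the mixed derivative $\partial^2\phi/\partial x\partial y>0$. Write $A_{ij}=\phi(x_j,y_i)$. For $x_j<x_{j'}$ and $y_{i'}<y_i$, increasing differences give
\[
\phi(x_{j'},y_i)-\phi(x_j,y_i)\ \ge\ \phi(x_{j'},y_{i'})-\phi(x_j,y_{i'}).
\]
This follows by integrating the mixed partial over the rectangle $[x_j,x_{j'}]\times[y_{i'},y_i]$. Rearranging yields $A_{ij}+A_{i'j'}\le A_{ij'}+A_{i'j}$. Getting the orientation right is the one step that needs care, since $y$ decreases with the index while $x$ increases.

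\textbf{Step 3: conclude.} From Step 2,
\[
M^{(r)}_{i'j'}-M^{(r)}_{ij'}\ \le\ M^{(r)}_{i'j}-M^{(r)}_{ij}.
\]
By hypothesis the right-hand side is $\le 0$, so $M^{(r)}_{i'j'}\le M^{(r)}_{ij'}$.

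The main obstacle is not the algebra but the staircase bookkeeping in Step 1. One must verify that the hypothesis forces the ordering $i<i'<j<j'$, so that the Monge inequality is applied only to feasible entries of the partial matrix. The sign convention in Step 2 is the remaining point to double-check.
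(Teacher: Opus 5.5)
Your proposal is correct and follows the paper's proof: the row offsets $D_{r-1}(i)$ cancel, so $M^{(r)}$ inherits the Monge inequality from $A$, and rearranging with the nonpositive right-hand side gives the claim. Your extra derivation of the Monge inequality from the positive mixed derivative of $\phi$ is sound (the paper only asserts it), and the feasibility bookkeeping is harmless but not a real obstacle: $A$ is Monge on all index pairs, and the lemma's hypothesis already places all four entries in $\Omega_r$.
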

\begin{proof}
The row offsets $D_{r-1}(i)$ preserve the Monge inequality. Hence
\[
  M^{(r)}_{ij}+M^{(r)}_{i'j'}
  \le
  M^{(r)}_{ij'}+M^{(r)}_{i'j}.
\]
Rearranging gives
\[
  M^{(r)}_{i'j'}-M^{(r)}_{ij'}
  \le
  M^{(r)}_{i'j}-M^{(r)}_{ij}.
\]
The right-hand side is nonpositive by assumption. Thus the later row remains no worse in the later column. Only feasible entries of the staircase matrix are used.
\end{proof}

\begin{lemma}[Single crossing]
For $i<i'$, the difference $f_i(x)-f_{i'}(x)$ is strictly increasing for $x>0$.
\end{lemma}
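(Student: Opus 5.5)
We prove the claim by writing the difference in closed form and differentiating it. For $i<i'$,
\[
  f_i(x)-f_{i'}(x)=D_{r-1}(i)-D_{r-1}(i')+\frac{xy_i}{2(x+y_i)}-\frac{xy_{i'}}{2(x+y_{i'})}.
\]
The constant offsets play no role in monotonicity, so it suffices to study $g(x)=\phi(x,y_i)-\phi(x,y_{i'})$, where $\phi(x,y)=xy/(2(x+y))$ is the generating function from the Monge section.

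The first route uses the mixed derivative. We have
\[
  \frac{\partial\phi}{\partial x}(x,y)=\frac{y^2}{2(x+y)^2},
\]
and, as computed in the Monge section, $\partial^2\phi/\partial x\,\partial y=xy/(x+y)^3>0$ for $x,y>0$. Hence $\partial\phi/\partial x$ is strictly increasing in $y$ on the positive quadrant. The sorted-archive assumption gives $y_i>y_{i'}>0$, so
\[
  g'(x)=\frac{y_i^2}{2(x+y_i)^2}-\frac{y_{i'}^2}{2(x+y_{i'})^2}>0
  \qquad\text{for all } x>0.
\]
A function with strictly positive derivative on $(0,\infty)$ is strictly increasing there, which is the claim.

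As a cross-check that avoids the mixed-derivative fact, we can combine the fractions directly. This gives
\[
  g(x)=\frac{x^2(y_i-y_{i'})}{2(x+y_i)(x+y_{i'})},
\]
which is exactly the left-hand side of \eqref{eq:crossing-equation}. The derivative can then be checked through the monotonicity of $y\mapsto y/(x+y)$. Equivalently, $g'(x)>0$ reduces to $y_i/(x+y_i)>y_{i'}/(x+y_{i'})$, and cross-multiplying gives $x(y_i-y_{i'})>0$.

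The only delicate point is to use strict positivity of $x$ and strict ordering $y_i>y_{i'}$. Both hold by the preliminaries, since all coordinates are positive after translating the utopian point to the origin and the archive is strictly sorted. This strictness is what makes the monotonicity strict rather than weak.

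Two consequences follow and are used by \textbf{Cross}. First, $g$ increases from $g(0^+)=0$ toward its supremum $(y_i-y_{i'})/2$. Second, since $g$ is strictly increasing, $f_i-f_{i'}$ changes sign at most once, from the earlier row being no worse to the later row being strictly better. This justifies the three cases stated after \eqref{eq:crossing-equation}.
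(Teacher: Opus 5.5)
Your proof is correct and is essentially the paper's argument: you differentiate the nonlinear part to get $y^2/(2(x+y)^2)$, then use that this is strictly increasing in $y$ for fixed $x>0$, together with $y_i>y_{i'}$. The paper justifies that monotonicity through $y\mapsto y/(x+y)$, exactly as in your cross-check; your mixed-derivative route and the closed form $g(x)=x^2(y_i-y_{i'})/(2(x+y_i)(x+y_{i'}))$ are equivalent additions that also explain the three cases used by \textbf{Cross}.
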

\begin{proof}
The derivative of the nonlinear part of $f_i$ is
\[
  \frac{d}{dx}\left(\frac{x y}{2(x+y)}\right)
  =
  \frac{y^2}{2(x+y)^2}.
\]
Since $i<i'$ implies $y_i>y_{i'}$, and $y/(x+y)$ is strictly increasing in $y$ for every fixed $x>0$, we have
\[
  \frac{y_i^2}{2(x+y_i)^2}
  >
  \frac{y_{i'}^2}{2(x+y_{i'})^2}.
\]
Therefore $(f_i-f_{i'} )'(x)>0$. Hence two rows cross at most once, and after a later row becomes better it remains better for all larger $x$.
\end{proof}

\begin{theorem}[Linear-time layer computation]
For any fixed layer $r$, all values $D_r(j)$, $j=r,\ldots,n$, can be computed from $D_{r-1}$ in $O(n)$ time, assuming constant-time arithmetic comparisons and crossing computations.
\end{theorem}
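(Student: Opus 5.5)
The plan is to prove that Envelope-Layer$(r)$ (Algorithm~\ref{alg:matrix-search-dp}) computes exactly the values of the Bellman recurrence \eqref{eq:dp} with leftmost tie-breaking, and then to bound its running time by amortization. The proof rests on a loop invariant for the sweep over $j=r,\ldots,n$. After row $h=j-1$ has been processed, the stack holds rows $i_1<i_2<\cdots<i_s$ with thresholds $-\infty=\alpha_1<\alpha_2<\cdots<\alpha_s$. The invariant is that for every $x>0$ the leftmost minimizer of $f_i(x)$ over the active rows $i\in\{r-1,\ldots,j-1\}$ is the row $i_t$ with $\alpha_t< x\le \alpha_{t+1}$ (boundary convention chosen to match Cross, which returns the first abscissa of \emph{strict} improvement).

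Two facts make this invariant well-defined. By the Single crossing lemma, for $i<i'$ the set $\{x>0 : f_{i'}(x)<f_i(x)\}$ is an up-set (a half-line). This justifies the single threshold returned by Cross$(i,h)$, including the degenerate cases $\Delta\le 0$ (threshold $0$) and $\Delta\ge (y_i-y_{i'})/2$ (threshold $+\infty$). Together with the Staircase total monotonicity lemma, this shows that the leftmost minimizer, viewed as a function of $x$, is a nondecreasing step function of the row index. Hence the lower envelope is a contiguous ordered sequence of rows, and a stack representation suffices.

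The inductive step is the insertion of $h$. Since $h$ is the largest active index, it beats every earlier row on a half-line. I will show the following steps in order.
\begin{enumerate}
\item If $\textbf{Cross}(i_s,h)\le\alpha_s$, then $h$ is strictly better than $i_s$ on the whole interval where $i_s$ was the envelope minimizer. So $i_s$ has no remaining open interval and may be popped. This uses the single-crossing property for the pair $(i_{s-1},i_s)$ and transitivity of "strictly better from threshold on".
\item Otherwise, $h$ starts winning exactly at $\alpha=\textbf{Cross}(i_s,h)>\alpha_s$. Before $\alpha$ it is no better than $i_s$, and hence no better than the envelope value there.
\item If $\alpha=+\infty$, the row $h$ is never strictly better and is discarded. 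This is consistent with leftmost tie-breaking.
\end{enumerate}
I expect the main obstacle to be handling ties and the staircase domain carefully. Rows enter only as they become feasible, so the argument must never compare a row against a column where it is infeasible. This is guaranteed because every stacked row $i\le j-1<j$. It remains to check that the strict-versus-weak inequalities in the pop test and in the pointer advance reproduce exactly the leftmost argmin of \eqref{eq:dp}.

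The query step follows from the invariant. Since $x_r<x_{r+1}<\cdots$, the query abscissa is monotone. Popping removes only rows whose intervals end before the new threshold $\alpha>\alpha_{s-1}$. I must check that the pointer never points to a popped element. If it does, it can be reset to the new top, since the popped rows' intervals lie to the right of the pointer's position or the new row dominates there. Advancing while the next interval starts at or before $x_j$ then returns $i^\star$, the leftmost minimizer of $f_i(x_j)$ over $i<j$. By \eqref{eq:matrix-search-query}, this yields the correct $D_r(j)$ and predecessor.

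For the complexity bound, each row is pushed at most once and popped at most once. Each iteration of the while loop either pops a row or terminates the loop for the current $j$. The pointer moves only forward over stack positions, up to resets caused by pops, which are charged to those pops. With constant-time Cross and comparisons, a layer therefore costs $O(n)$.
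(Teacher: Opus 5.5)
Your proposal is correct and takes the same approach as the paper: a lower-envelope stack justified by the single-crossing lemma, strict-improvement switching for leftmost ties, and amortized push/pop plus forward-pointer accounting. Your explicit loop invariant and your pointer reset after pops make precise what the paper's proof only asserts (``the query pointer never moves left''). One detail to fix when writing it out: under your convention $\alpha_t<x\le\alpha_{t+1}$, the pointer must advance only when the next threshold is strictly below $x_j$, because at equality the two rows tie. This affects only leftmost tie-breaking, not the values $D_r(j)$.
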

\begin{proof}
By the single-crossing lemma, each new row has at most one crossing with each row currently on the envelope. The stack insertion rule is the standard lower-envelope update for single-crossing functions: a row is popped exactly when the new row becomes better no later than the popped row would have become strictly optimal. After the insertion terminates, the stack stores the active envelope rows in increasing predecessor order and the crossing thresholds are increasing. Because the query points $x_j$ are increasing, the query pointer never moves left. Each row is inserted once and popped at most once, and the pointer advances at most once per surviving envelope interval. Thus the total work in one layer is $O(n)$. The row returned at query $x_j$ minimizes \eqref{eq:matrix-search-query}; subtracting $A_{jj}$ gives the Bellman value \eqref{eq:dp}. Since switching occurs only when a later row is strictly better, ties are resolved by the smallest predecessor index.
\end{proof}

\subsubsection{Complexity}
Applying the theorem to all $k-1$ nontrivial layers gives an exact $O(kn)$ time algorithm for fixed-cardinality integral $R_2$ subset selection in the bi-objective setting. With backpointers the memory usage is $O(kn)$; if only the optimum value is needed, two value layers suffice and memory is $O(n)$. The reference implementation in the reproducibility repository includes this method. It uses floating-point arithmetic with small tolerances and checks it against the direct and divide-and-conquer dynamic programs on the reproducibility cases.

\paragraph{Relation to two-dimensional hypervolume subset selection.}
A similar matrix-search viewpoint also applies to the classical two-dimensional hypervolume subset-selection problem. For a sorted biobjective Pareto-front approximation, the hypervolume contribution of a selected sequence admits a local dynamic-programming recurrence with transition term of the bilinear form $-x_i y_j$, up to row and column offsets. This transition matrix is Monge or anti-Monge, depending on whether the problem is written as a minimization or maximization recurrence, and therefore the same monotone-predecessor/matrix-search principle yields $O(kn)$ time after sorting. This does not improve the best known two-dimensional hypervolume subset-selection bound $O((n-k)k+n\log n)$ \cite{bringmann2014twodimensional,kuhn2016hypervolume}, but it shows that the present integral $\Rtwo$ result fits a broader pattern: in two objectives, exact indicator subset selection often becomes fast when the indicator decomposes into local predecessor-successor terms with a Monge transition structure.

\subsection{Complexity summary for the three algorithms}
\label{subsec:complexity-scaling}

The improvement should be interpreted per dynamic-programming layer. The direct recurrence spends $O(n^2)$ time in each layer and therefore $O(kn^2)$ time overall. Monge monotonicity reduces one layer to $O(n\log n)$ by divide-and-conquer search, giving $O(kn\log n)$ total time. Thus, for fixed or small $k$, the accelerated method is nearly linear in $n$ up to a logarithmic factor. For balanced subset sizes, however, $k=\Theta(n)$, so the same bound becomes
\[
  O(n^2\log n).
\]
It is therefore not correct to describe the full algorithm simply as $O(n\log n)$ unless $k$ is treated as a fixed constant.

The asymptotic regimes are summarized in Table~\ref{tab:scaling-regimes}. 

\begin{table}[H]
\centering
\begin{tabular}{lccc}
\toprule
Regime & Direct DP & Divide-and-conquer DP & Matrix-search DP \\
\midrule
fixed $k$ & $O(n^2)$ & $O(n\log n)$ & $O(n)$ \\
$k=\Theta(n)$ & $O(n^3)$ & $O(n^2\log n)$ & $O(n^2)$ \\
\bottomrule
\end{tabular}
\caption{Complexity interpretation in two common asymptotic regimes. The divide-and-conquer bound is $O(n\log n)$ per layer and $O(kn\log n)$ overall; the matrix-search bound is $O(n)$ per layer and $O(kn)$ overall.}
\label{tab:scaling-regimes}
\end{table}

\section{Computational study}
\label{sec:computational-study}
The experiments have two purposes. First, they verify that the three dynamic-programming implementations agree with exhaustive enumeration on instances where exhaustive enumeration is feasible. Second, they illustrate the empirical scaling of the direct, divide-and-conquer, and matrix-search implementations in balanced and fixed-cardinality regimes.
\subsection{Illustrative seven-point instance}

Figure~\ref{fig:front} shows the seven-point staircase instance used in the Python test file. For $k=3$, the dynamic program selects points $1,5,7$.

\begin{figure}[H]
\centering
\begin{tikzpicture}[scale=0.27, every node/.style={font=\small}]
  \draw[-{Latex[length=2.4mm]}] (0,0) -- (19,0) node[right] {$f_1$};
  \draw[-{Latex[length=2.4mm]}] (0,0) -- (0,22) node[above] {$f_2$};
  \foreach \x in {0,5,10,15} {\draw[gray!35] (\x,0) -- (\x,21.2); \node[below] at (\x,0) {\x};}
  \foreach \y in {0,5,10,15,20} {\draw[gray!35] (0,\y) -- (18.2,\y); \node[left] at (0,\y) {\y};}

  \draw[gray!70, thick] (2,20) -- (4,18) -- (6,16) -- (9,12) -- (11,8) -- (14,5) -- (17,3);

  \foreach \x/\y/\lab in {2/20/1,4/18/2,6/16/3,9/12/4,11/8/5,14/5/6,17/3/7} {
    \fill[gray!70] (\x,\y) circle (3pt);
    \node[above right=1pt] at (\x,\y) {$p_{\lab}$};
  }

  \foreach \x/\y/\lab in {2/20/1,11/8/5,17/3/7} {
    \draw[red!70!black, very thick] (\x,\y) circle (6pt);
  }
  \draw[red!70!black, very thick, dashed] (2,20) -- (11,8) -- (17,3);
  \node[align=left, anchor=west] at (1,2) {Selected for $k=3$:\\$\{p_1,p_5,p_7\}$};
\end{tikzpicture}
\caption{A sorted bi-objective Pareto front. The red-circled points are the optimum cardinality-three subset under the exact integral $R_2$ objective for this instance.}
\label{fig:front}
\end{figure}

\subsection{Illustrative twenty-point instance}

Figure~\ref{fig:larger-front} shows a twenty-point Pareto-front approximation. For $k=10$, the dynamic program selects points
\[
  \{p_1,p_4,p_6,p_7,p_8,p_9,p_{10},p_{12},p_{14},p_{17}\},
\]
with exact value $\Rtwo=4.936582464$. This instance is also included in the reproducibility material and is verified by exhaustive enumeration.

\begin{figure}[H]
\centering
\begin{tikzpicture}[x=0.118cm,y=0.155cm, every node/.style={font=\scriptsize}]
  \draw[-{Latex[length=2.4mm]}] (0,0) -- (99,0) node[right] {$f_1$};
  \draw[-{Latex[length=2.4mm]}] (0,0) -- (0,43) node[above] {$f_2$};
  \foreach \x in {0,20,40,60,80} {\draw[gray!30] (\x,0) -- (\x,42); \node[below] at (\x,0) {\x};}
  \foreach \y in {0,10,20,30,40} {\draw[gray!30] (0,\y) -- (96,\y); \node[left] at (0,\y) {\y};}

  \draw[gray!65, thick]
    (1,40) -- (2,32) -- (3,28) -- (4,24) -- (5.5,20.5)
    -- (7,17.5) -- (9,15) -- (11.5,12.8) -- (14,10.8) -- (17,9.1)
    -- (20,7.6) -- (24,6.2) -- (29,5.1) -- (35,4.2) -- (42,3.4)
    -- (50,2.7) -- (59,2.15) -- (69,1.7) -- (80,1.35) -- (92,1.05);

  \foreach \x/\y/\lab in {1/40/1,2/32/2,3/28/3,4/24/4,5.5/20.5/5,7/17.5/6,9/15/7,11.5/12.8/8,14/10.8/9,17/9.1/10,20/7.6/11,24/6.2/12,29/5.1/13,35/4.2/14,42/3.4/15,50/2.7/16,59/2.15/17,69/1.7/18,80/1.35/19,92/1.05/20} {
    \fill[gray!65] (\x,\y) circle (2.2pt);
  }

  \draw[red!70!black, very thick, dashed]
    (1,40) -- (4,24) -- (7,17.5) -- (9,15) -- (11.5,12.8)
    -- (14,10.8) -- (17,9.1) -- (24,6.2) -- (35,4.2) -- (59,2.15);

  \foreach \x/\y/\lab in {1/40/1,4/24/4,7/17.5/6,9/15/7,11.5/12.8/8,14/10.8/9,17/9.1/10,24/6.2/12,35/4.2/14,59/2.15/17} {
    \draw[red!70!black, very thick] (\x,\y) circle (4.2pt);
    \node[red!70!black, above right=1pt] at (\x,\y) {$p_{\lab}$};
  }

  \node[align=left, anchor=north west, fill=white, draw=gray!40, rounded corners=2pt, inner sep=3pt]
    at (51,41) {Twenty candidate points\\selected subset for $k=10$\\$\Rtwo=4.936582464$};
\end{tikzpicture}
\caption{A larger Pareto-front approximation with twenty candidate points. The red-circled points are the optimum cardinality-ten subset under the exact integral $R_2$ objective for this instance.}
\label{fig:larger-front}
\end{figure}

\subsection{Empirical verification tests}
\label{sec:examples}

The reproducibility material in the repository implements the direct recurrence \eqref{eq:dp}, the divide-and-conquer accelerated recurrence, the matrix-search recurrence from Section~\ref{sec:matrix-search}, exact value evaluation by \eqref{eq:adjacent-decomposition}, and brute-force verification for the displayed examples. Representative output is shown in Table~\ref{tab:tests}. The same material also includes the runtime experiment summarized in Section~\ref{sec:runtime}. The final two blocks add a twelve-point instance with larger cardinalities and a twenty-point graphical instance; in both cases, the dynamic-programming result is checked against complete enumeration. The repository also includes an eighty-point non-brute-force check comparing the direct, divide-and-conquer, and matrix-search implementations.

\begin{table}[H]
\centering
\small
\begin{tabular}{llp{6.4cm}r}
\toprule
Instance & $k$ & selected one-based indices & exact $R_2$ \\
\midrule
Seven-point staircase & 2 & $(1,6)$ & 4.866450887 \\
Seven-point staircase & 3 & $(1,5,7)$ & 4.268506714 \\
Seven-point staircase & 4 & $(1,3,5,7)$ & 4.105253002 \\
Seven-point staircase & 5 & $(1,3,5,6,7)$ & 4.020420466 \\
Six-point mildly irregular & 3 & $(1,4,6)$ & 2.824432278 \\
Five-point convex & 3 & $(2,4,5)$ & 2.783431481 \\
Twelve-point larger-$k$ front & 6 & $(2,4,5,6,8,11)$ & 3.674859356 \\
Twelve-point larger-$k$ front & 7 & $(2,4,5,6,7,9,12)$ & 3.603086767 \\
Twelve-point larger-$k$ front & 8 & $(1,3,4,5,6,7,9,12)$ & 3.546608278 \\
Twelve-point larger-$k$ front & 9 & $(1,3,4,5,6,7,8,10,12)$ & 3.501284795 \\
Twenty-point graphical front & 8 & $(1,4,6,7,9,11,13,16)$ & 5.035586573 \\
Twenty-point graphical front & 10 & $(1,4,6,7,8,9,10,12,14,17)$ & 4.936582464 \\
Twenty-point graphical front & 12 & $(1,3,5,6,7,8,9,10,11,12,14,17)$ & 4.863355081 \\
\bottomrule
\end{tabular}
\caption{All displayed dynamic-programming solutions are verified by exhaustive enumeration in the reproducibility material; the direct, divide-and-conquer, and matrix-search implementations also agree.}
\label{tab:tests}
\end{table}

\subsection{Runtime assessment with CPU times}
\label{sec:runtime}

To complement the asymptotic discussion, the reference implementation in the reproducibility repository measures CPU wall-clock times using \texttt{time.perf\_counter()} on deterministic fronts. The benchmark compares exhaustive enumeration, the direct left-to-right dynamic program (LRDP), the divide-and-conquer dynamic program (DC DP), and the matrix-search dynamic program. Exhaustive enumeration is run with a one-second wall-clock time limit per case. All dynamic-programming variants are checked against each other for every benchmark case. When exhaustive enumeration finishes within the time limit, its selected subset and value are also checked against the DP result; otherwise the table records the exhaustive entry as $>1000$ ms. The DP timings are averages over ten runs. The timings are intended as reproducibility-oriented reference figures from the unoptimized pure-Python code, not as tuned implementation benchmarks.

Table~\ref{tab:runtime-balanced} reports balanced instances with $k=n/2$. The exhaustive runtime grows quickly with the number of subsets, while all dynamic-programming variants remain below a millisecond throughout this range. Figure~\ref{fig:runtime-balanced} visualizes the same data on a logarithmic vertical axis. Time-limited exhaustive entries are shown at the one-second cap.

\begin{table}[H]
\centering
\scriptsize
\begin{tabular}{rrrrrrr}
\toprule
$n$ & $k$ & $\binom{n}{k}$ & Exhaustive (ms) & LRDP (ms) & DC DP (ms) & Matrix (ms) \\
\midrule
8 & 4 & 70 & 0.233 & 0.056 & 0.060 & 0.082 \\
10 & 5 & 252 & 0.983 & 0.061 & 0.058 & 0.099 \\
12 & 6 & 924 & 3.267 & 0.077 & 0.075 & 0.135 \\
14 & 7 & 3,432 & 13.505 & 0.107 & 0.104 & 0.181 \\
16 & 8 & 12,870 & 55.671 & 0.148 & 0.128 & 0.236 \\
18 & 9 & 48,620 & 229.944 & 0.205 & 0.170 & 0.309 \\
20 & 10 & 184,756 & $>1000$ & 0.364 & 0.274 & 0.457 \\
\bottomrule
\end{tabular}
\caption{Mean wall-clock runtimes in milliseconds for balanced cases $k=n/2$. Exhaustive enumeration has a one-second time limit per case.}
\label{tab:runtime-balanced}
\end{table}

\begin{figure}[H]
\centering
\begin{tikzpicture}
\begin{semilogyaxis}[
  width=0.70\textwidth,
  height=7.0cm,
  xlabel={number of candidate points $n$},
  ylabel={mean runtime (ms)},
  xmin=8, xmax=20,
  ymin=0.03, ymax=1500,
  xtick={8,10,12,14,16,18,20},
  grid=both,
  major grid style={gray!35},
  minor grid style={gray!18},
  legend style={at={(-0.22,0.98)}, anchor=north east, fill=white, draw=black, font=\scriptsize, cells={anchor=west}},
  tick align=outside
]
\addplot+[thick, mark=*, blue] coordinates {(8,0.233) (10,0.983) (12,3.267) (14,13.505) (16,55.671) (18,229.944)};
\addplot+[only marks, mark=x, mark size=3.0pt, blue] coordinates {(20,1000)};
\addplot+[thick, mark=square*, red!75!black] coordinates {(8,0.056) (10,0.061) (12,0.077) (14,0.107) (16,0.148) (18,0.205) (20,0.364)};
\addplot+[thick, mark=triangle*, green!50!black] coordinates {(8,0.060) (10,0.058) (12,0.075) (14,0.104) (16,0.128) (18,0.170) (20,0.274)};
\addplot+[thick, mark=diamond*, RTwoPurple] coordinates {(8,0.082) (10,0.099) (12,0.135) (14,0.181) (16,0.236) (18,0.309) (20,0.457)};
\legend{Exhaustive, Exhaustive time limit, LRDP, DC DP, Matrix DP}
\end{semilogyaxis}
\end{tikzpicture}
\caption{CPU runtime comparison on deterministic balanced fronts with $k=n/2$. The cross marks a case where exhaustive enumeration reached the one-second time limit.}
\label{fig:runtime-balanced}
\end{figure}

Table~\ref{tab:runtime-fixed-k} repeats the experiment for a small constant cardinality, here $k=6$, and extends the range to $n=100$. This illustrates the fixed-$k$ regime discussed in Table~\ref{tab:scaling-regimes}: exhaustive enumeration grows like $\binom{n}{6}=\Theta(n^6)$, while the dynamic programs grow gently. Figure~\ref{fig:runtime-fixed-k} gives the corresponding plot, with capped exhaustive entries shown at the one-second line.

\begin{table}[H]
\centering
\scriptsize
\begin{tabular}{rrrrrrr}
\toprule
$n$ & $k$ & $\binom{n}{k}$ & Exhaustive (ms) & LRDP (ms) & DC DP (ms) & Matrix (ms) \\
\midrule
10 & 6 & 210 & 0.775 & 0.058 & 0.066 & 0.116 \\
14 & 6 & 3,003 & 12.066 & 0.104 & 0.097 & 0.167 \\
18 & 6 & 18,564 & 67.555 & 0.156 & 0.229 & 0.411 \\
22 & 6 & 74,613 & 288.046 & 0.233 & 0.242 & 0.299 \\
26 & 6 & 230,230 & 870.697 & 0.306 & 0.190 & 0.339 \\
30 & 6 & 593,775 & $>1000$ & 0.411 & 0.235 & 0.390 \\
40 & 6 & 3,838,380 & $>1000$ & 0.751 & 0.438 & 0.553 \\
60 & 6 & 50,063,860 & $>1000$ & 1.625 & 0.513 & 1.005 \\
80 & 6 & 300,500,200 & $>1000$ & 2.858 & 1.128 & 1.053 \\
100 & 6 & 1,192,052,400 & $>1000$ & 4.408 & 0.922 & 1.327 \\
\bottomrule
\end{tabular}
\caption{Mean wall-clock runtimes in milliseconds for the fixed-cardinality regime $k=6$, extended to $n=100$. Exhaustive enumeration has a one-second time limit per case.}
\label{tab:runtime-fixed-k}
\end{table}

\begin{figure}[H]
\centering
\begin{tikzpicture}
\begin{semilogyaxis}[
  width=0.70\textwidth,
  height=7.0cm,
  xlabel={number of candidate points $n$},
  ylabel={mean runtime (ms)},
  xmin=10, xmax=100,
  ymin=0.05, ymax=1500,
  xtick={10,20,40,60,80,100},
  grid=both,
  major grid style={gray!35},
  minor grid style={gray!18},
  legend style={at={(-0.22,0.98)}, anchor=north east, fill=white, draw=black, font=\scriptsize, cells={anchor=west}},
  tick align=outside
]
\addplot+[thick, mark=*, blue] coordinates {(10,0.775) (14,12.066) (18,67.555) (22,288.046) (26,870.697)};
\addplot+[only marks, mark=x, mark size=3.0pt, blue] coordinates {(30,1000) (40,1000) (60,1000) (80,1000) (100,1000)};
\addplot+[thick, mark=square*, red!75!black] coordinates {(10,0.058) (14,0.104) (18,0.156) (22,0.233) (26,0.306) (30,0.411) (40,0.751) (60,1.625) (80,2.858) (100,4.408)};
\addplot+[thick, mark=triangle*, green!50!black] coordinates {(10,0.066) (14,0.097) (18,0.229) (22,0.242) (26,0.190) (30,0.235) (40,0.438) (60,0.513) (80,1.128) (100,0.922)};
\addplot+[thick, mark=diamond*, RTwoPurple] coordinates {(10,0.116) (14,0.167) (18,0.411) (22,0.299) (26,0.339) (30,0.390) (40,0.553) (60,1.005) (80,1.053) (100,1.327)};
\legend{Exhaustive, Exhaustive time limit, LRDP, DC DP, Matrix DP}
\end{semilogyaxis}
\end{tikzpicture}
\caption{CPU runtime comparison for a small constant cardinality, $k=6$, up to $n=100$. All three dynamic-programming implementations are checked against each other for every displayed case.}
\label{fig:runtime-fixed-k}
\end{figure}

\section{Summary and outlook}
\label{sec:conclusions}

This paper has derived exact dynamic-programming algorithms for fixed-cardinality subset selection under the bi-objective integral $R_2$ indicator. The central observation is that, for a sorted Pareto-front approximation, the continuous Tchebycheff-envelope integral decomposes into boundary terms, unary diagonal corrections, and adjacent selected-neighbor transitions. This gives a direct Bellman recurrence with $O(kn^2)$ running time. The transition matrix is Monge, so predecessor indices are monotone and a divide-and-conquer implementation reduces the cost to $O(kn\log n)$. The stronger matrix-search result uses the staircase totally monotone structure of each layer and a lower-envelope sweep to compute each layer in $O(n)$ time, giving an exact $O(kn)$ algorithm under constant-time arithmetic comparisons. The reference implementation in the reproducibility repository verifies the direct, divide-and-conquer, and matrix-search dynamic programs against exhaustive enumeration where feasible and reports benchmark data under explicit time limits.

The result should be interpreted in the exact integral setting. It does not rely on a finite weight-vector discretization, and no normalization to $[0,1]$ is required for the Monge condition. What is required is positivity relative to the chosen utopian point and a sorted representation of a two-dimensional Pareto-front approximation. Positive coordinate rescalings preserve the Monge proof, although they change the numerical indicator values and therefore remain a modeling choice.

Several directions remain open. First, other weight densities or scalarizing families may admit similar ordered-subset decompositions, but the transition formula and Monge proof would have to be derived separately. Second, the matrix-search proof suggests looking for other indicator subset-selection problems whose dynamic-programming layers are staircase Monge or totally monotone matrices. Third, the higher-dimensional exact integral $R_2$ setting studied by Jaszkiewicz and Zielniewicz suggests an important extension target. The present dynamic program is inherently bi-objective: the weight domain is the interval $[0,1]$, the lower envelope is one-dimensional and piecewise linear, and sorted Pareto-front approximations have a predecessor--successor structure. For three or more objectives, the weight domain is a simplex and the lower envelope becomes a higher-dimensional polyhedral complex. A selected point can interact through facets with many other points, so the adjacent-neighbor decomposition used here need not survive. Understanding whether special higher-dimensional Pareto-front geometries, fixed dimension, or computational-geometry representations of the Tchebycheff envelope lead to exact subset-selection algorithms is a natural open problem. Finally, the algorithm suggests a practical archive-reduction tool for exact integral $R_2$ benchmarking and decision-support pipelines; integrating it into multiobjective optimization software and comparing it with hypervolume-based subset selection are natural next experimental steps.

\section*{Code availability}
The reproducibility material is available at\par\noindent\url{https://github.com/emmerichmtm/integral-r2-subset-selection}.
The repository contains the Python reference implementation, the consistency tests comparing exhaustive enumeration and the dynamic programs, and the scripts used to generate the runtime tables.

\section*{Acknowledgements}
This research is related to the thematic research area Decision Analytics utilizing Causal Models and Multiobjective Optimization (DEMO, jyu.fi/demo) of the University of Jyvaskyla.

\section*{Declaration on the use of generative AI}
Generative AI tools were used for editorial assistance, LaTeX and Python drafting support, and consistency checking during manuscript preparation. All mathematical statements, proofs, algorithms, experiments, citations, and final wording were reviewed by the author, who remains fully responsible for the content of the manuscript.

\appendix

\newpage
\Needspace{10\baselineskip}
\section{Derivation of the exact bi-objective integral \texorpdfstring{$R_2$}{R2} decomposition}
\label{app:r2-integral-derivation}

This appendix gives a detailed derivation of the closed-form expression for the exact integral bi-objective
$R_2$ value used in the main text. It is included for exposition and for notational self-containment only. It does not add a new mathematical result. The integral viewpoint is due to the exact bi-objective $R_2$ analysis of Sch\"apermeier and Kerschke~\cite{schaepermeier2024reinvestigating,schaepermeier2025r2v2}; the Tchebycheff-shadow terminology and geometric interpretation used in Figure~\ref{fig:tchebycheff-shadows} follows Emmerich~\cite{tchebEmmerich}.

We use the notation of Figure~\ref{fig:tchebycheff-shadows}. Each archive point
\[
  p_i=(x_i,y_i)
\]
defines a Tchebycheff shadow over the weight interval,
\[
  \tau_i(w)=\max\{w x_i,(1-w)y_i\},
  \qquad 0\le w\le 1.
\]
For a selected subset $S$, the exact integral $R_2$ value is the area under the lower envelope of these shadows:
\[
  R_2(S)
  =
  \int_0^1 \min_{i\in S}\tau_i(w)\,dw
  =
  \int_0^1
  \min_{i\in S}
  \max\{w x_i,(1-w)y_i\}\,dw.
\]
Thus, for each weight $w$, one first chooses the selected point with smallest Tchebycheff loss, and only after this pointwise minimization is the result integrated over $w$.

Figure~\ref{fig:appendix-tchebycheff-shadow-geometry} further illustrates the terminology used below. In objective space, the weighted Tchebycheff sublevel sets are rectangles anchored at the translated utopian point. In weight space, the same point gives the shadow function $\tau_i(w)$; this is the object whose lower envelope is integrated.

\begin{figure}[H]
\centering
\resizebox{\textwidth}{!}{%
\begin{tikzpicture}[
  >=Latex,
  font=\small,
  shadow/.style={blue!65!black, very thick},
  otherShadow/.style={gray!65, thick, dashed},
  point/.style={circle, fill=gray!65, inner sep=1.8pt},
  focusPoint/.style={circle, fill=blue!65!black, inner sep=2.7pt},
  annotation/.style={align=left, font=\small},
]

% ================================================================
% Left panel: objective space
% ================================================================
\begin{scope}[shift={(0,0)}, x=0.62cm, y=0.62cm]
  \node[anchor=west] at (0.35,7.55) {\textbf{Objective space}};

  % Axes
  \draw[->] (0,0) -- (8.9,0) node[right] {$f_1$};
  \draw[->] (0,0) -- (0,7.2) node[above] {$f_2$};

  % Utopian point
  \fill[black] (0,0) circle (1.7pt);
  \node[below left] at (0,0) {$z^+$};

  % Example Pareto chain points
  \coordinate (pone) at (1.0,6.2);
  \coordinate (ptwo) at (1.8,5.3);
  \coordinate (pi) at (3.1,4.1);
  \coordinate (pfour) at (4.7,2.7);
  \coordinate (pfive) at (6.7,1.5);

  \draw[gray!55, thick] (pone) -- (ptwo) -- (pi) -- (pfour) -- (pfive);
  \foreach \p/\lab in {pone/{p_1},ptwo/{p_2},pfour/{p_4},pfive/{p_5}}{
    \node[point] at (\p) {};
    \node[above right=1pt] at (\p) {$\lab$};
  }
  \node[focusPoint] at (pi) {};
  \node[above right=2pt, blue!65!black] at (pi) {$p_i$};

  % Coordinate projections for p_i
  \draw[blue!65!black, densely dotted, thick] (pi) -- (3.1,0) node[below] {$x_i$};
  \draw[blue!65!black, densely dotted, thick] (pi) -- (0,4.1);
  \node[left, blue!65!black] at (0,4.1) {$y_i$};

  % Balanced Tchebycheff rectangle for p_i
  \draw[blue!65!black, thick, rounded corners=1pt]
    (0,0) rectangle (3.1,4.1);

  % Additional rectangles for intuition
  \draw[gray!45, dashed] (0,0) rectangle (2.0,5.8);
  \draw[gray!45, dashed] (0,0) rectangle (5.5,2.3);

  % Annotation box to the right, away from the chain
  \node[draw=blue!65!black, rounded corners=2pt, fill=white,
        annotation, text=blue!65!black, anchor=north west, inner sep=3pt]
    at (5.15,6.55)
    {$p_i=(x_i,y_i)$\\[-1mm]
     weighted Tchebycheff\\[-1mm]
     rectangle touching\\[-1mm]
     the highlighted point};

  \node[annotation, anchor=north west] at (0.2,-0.55)
    {For fixed $w$ and level $t$:\\[-1mm]
     $\max\{w x,(1-w)y\}\le t$ is a rectangle.};
\end{scope}

% Connector arrow
\draw[->, thick, blue!65!black] (6.15,2.55) -- (7.65,2.55)
  node[midway, below, align=center] {induces\\shadow};

% ================================================================
% Right panel: weight space
% ================================================================
\begin{scope}[shift={(8.15,0)}, x=6.1cm, y=0.85cm]
  \node[anchor=west] at (0,7.55) {\textbf{Weight space}};

  % Axes
  \draw[->] (0,0) -- (1.08,0) node[right] {$w$};
  \draw[->] (0,0) -- (0,6.9) node[above] {loss};
  \draw (0,-0.08) -- (0,0.08) node[below=3pt] {$0$};
  \draw (1,-0.08) -- (1,0.08) node[below=3pt] {$1$};

  % Other shadows: tau_j(w)=max{w x_j,(1-w)y_j}
  \draw[otherShadow] (0,6.2) -- ({6.2/(1.0+6.2)},{1.0*6.2/(1.0+6.2)}) -- (1,1.0);
  \draw[otherShadow] (0,5.3) -- ({5.3/(1.8+5.3)},{1.8*5.3/(1.8+5.3)}) -- (1,1.8);
  \draw[otherShadow] (0,2.7) -- ({2.7/(4.7+2.7)},{4.7*2.7/(4.7+2.7)}) -- (1,4.7);
  \draw[otherShadow] (0,1.5) -- ({1.5/(6.7+1.5)},{6.7*1.5/(6.7+1.5)}) -- (1,6.7);

  % Focus shadow p_i=(3.1,4.1)
  \pgfmathsetmacro{\xi}{3.1}
  \pgfmathsetmacro{\yi}{4.1}
  \pgfmathsetmacro{\ci}{\yi/(\xi+\yi)}
  \pgfmathsetmacro{\hi}{\xi*\yi/(\xi+\yi)}
  \draw[shadow] (0,\yi) -- (\ci,\hi) -- (1,\xi);
  \fill[blue!65!black] (\ci,\hi) circle (2.2pt);
  \draw[blue!65!black, densely dotted] (\ci,0) -- (\ci,\hi);
  \node[below, blue!65!black] at (\ci,0) {$c_i$};
  \node[right, blue!65!black] at (\ci,\hi) {kink};

  % Branch annotations
  \node[blue!65!black, anchor=east] at (0.38,3.76) {$(1-w)y_i$};
  \node[blue!65!black, anchor=west] at (0.75,2.95) {$w x_i$};

  % Equation box
  \node[draw=blue!65!black, rounded corners=2pt, fill=white,
        align=left, anchor=north west, inner sep=4pt]
    at (0.09,6.35)
    {$\displaystyle \tau_i(w)=\max\{w x_i,(1-w)y_i\}$\\[1mm]
     $\displaystyle c_i=\frac{y_i}{x_i+y_i}$};

  \node[annotation, anchor=north west] at (0.02,-0.55)
    {Dashed curves: shadows $\tau_j(w)$ of other points.\\
     Bold curve: highlighted shadow $\tau_i(w)$.};
\end{scope}

\end{tikzpicture}%
}
\caption{Objective-space and weight-space views of a Tchebycheff shadow. The left panel shows a sorted nondominated chain and weighted Tchebycheff sublevel rectangles anchored at the translated utopian point $z^+$. The right panel shows the corresponding shadow curves $\tau_j(w)$ over the weight interval. The highlighted point $p_i=(x_i,y_i)$ gives the bold shadow $\tau_i(w)=\max\{w x_i,(1-w)y_i\}$, with kink at $c_i=y_i/(x_i+y_i)$.}
\label{fig:appendix-tchebycheff-shadow-geometry}
\end{figure}

We assume throughout that the archive is a sorted nondominated chain,
\[
  0<x_1<\cdots <x_n,
  \qquad
  y_1>\cdots >y_n>0.
\]
Let the selected subset be written as
\[
  S=(i_1<i_2<\cdots<i_k).
\]
We shall derive
\[
  R_2(i_1,\ldots,i_k)
  =
  \frac{x_{i_1}}{2}
  +
  \sum_{r=1}^{k-1} A_{i_r i_{r+1}}
  +
  \frac{y_{i_k}}{2}
  -
  \sum_{r=1}^k A_{i_r i_r},
\]
where
\[
  A_{ij}
  =
  \frac{x_j y_i}{2(x_j+y_i)}.
\]

\subsection*{The shadow of a single point}

For a single point $p_i=(x_i,y_i)$, the shadow is
\[
  \tau_i(w)=\max\{w x_i,(1-w)y_i\}.
\]
It consists of two line segments. The branch
\[
  (1-w)y_i
\]
is decreasing in $w$, while the branch
\[
  w x_i
\]
is increasing in $w$. Their intersection is determined by
\[
  w x_i=(1-w)y_i.
\]
Solving gives
\[
  w x_i=y_i-wy_i,
\]
hence
\[
  w(x_i+y_i)=y_i,
\]
and therefore
\[
  c_i=\frac{y_i}{x_i+y_i}.
\]
At this weight,
\[
  c_i x_i
  =
  (1-c_i)y_i
  =
  \frac{x_i y_i}{x_i+y_i}.
\]
The value $c_i$ is the kink position of the shadow $\tau_i$.

The integral of this single shadow is
\[
  \int_0^1 \tau_i(w)\,dw
  =
  \int_0^{c_i} (1-w)y_i\,dw
  +
  \int_{c_i}^1 w x_i\,dw.
\]
The two parts are
\[
  \int_0^{c_i} (1-w)y_i\,dw
  =
  y_i\left(c_i-\frac{c_i^2}{2}\right),
\]
and
\[
  \int_{c_i}^1 w x_i\,dw
  =
  \frac{x_i}{2}(1-c_i^2).
\]
Substituting $c_i=y_i/(x_i+y_i)$ and simplifying gives
\[
  \int_0^1 \tau_i(w)\,dw
  =
  \frac{x_i}{2}
  +
  \frac{y_i}{2}
  -
  \frac{x_i y_i}{2(x_i+y_i)}.
\]
It is therefore natural to define
\[
  A_{ii}
  =
  \frac{x_i y_i}{2(x_i+y_i)}.
\]
Then the one-point case reads
\[
  R_2(\{i\})
  =
  \frac{x_i}{2}
  +
  \frac{y_i}{2}
  -
  A_{ii}.
\]
This calculation explains the diagonal correction term. It is the amount by which the area under the shadow differs from the sum of the two elementary boundary triangles.

\subsection*{The switch between two consecutive selected points}

Now consider two selected points $i<j$. Since the archive is sorted nondominated,
\[
  x_i<x_j,
  \qquad
  y_i>y_j.
\]
Thus point $i$ is better in the first objective, while point $j$ is better in the second objective. On the weight interval this means that point $j$ is favoured near $w=0$, and point $i$ is favoured near $w=1$.

The switch of the lower envelope between these two consecutive selected points is determined by the intersection of the decreasing branch of point $i$ and the increasing branch of point $j$:
\[
  (1-w)y_i=w x_j.
\]
This is the crossing relevant for the lower envelope between consecutive selected points. Solving gives
\[
  w_{ij}
  =
  \frac{y_i}{x_j+y_i}.
\]
At this switch,
\[
  w_{ij}x_j
  =
  (1-w_{ij})y_i
  =
  \frac{x_jy_i}{x_j+y_i}.
\]
We define the corresponding corner term by
\[
  A_{ij}
  =
  \frac{x_jy_i}{2(x_j+y_i)}.
\]
This is one half of the switch height. These terms are the adjacent selected-neighbor transition terms in the final expression.

\subsection*{The active interval of a selected point}

Let
\[
  S=(i_1<i_2<\cdots<i_k)
\]
be the selected chain. Define the switch weights between consecutive selected points by
\[
  \alpha_r
  =
  w_{i_r i_{r+1}}
  =
  \frac{y_{i_r}}{x_{i_{r+1}}+y_{i_r}},
  \qquad r=1,\ldots,k-1.
\]
For notational convenience, set
\[
  \alpha_0=1,
  \qquad
  \alpha_k=0.
\]
Then the selected point $i_r$ is active on the lower envelope over the interval
\[
  [\alpha_r,\alpha_{r-1}].
\]
This convention is a consequence of the orientation of the weight interval: small $w$ emphasizes the second objective, while large $w$ emphasizes the first objective.

The kink position of the active shadow is
\[
  c_{i_r}
  =
  \frac{y_{i_r}}{x_{i_r}+y_{i_r}}.
\]
This kink lies inside the active interval. Indeed, since $x_{i_r}<x_{i_{r+1}}$,
\[
  \alpha_r
  =
  \frac{y_{i_r}}{x_{i_{r+1}}+y_{i_r}}
  <
  \frac{y_{i_r}}{x_{i_r}+y_{i_r}}
  =
  c_{i_r}.
\]
Similarly, since $y_{i_{r-1}}>y_{i_r}$,
\[
  c_{i_r}
  =
  \frac{y_{i_r}}{x_{i_r}+y_{i_r}}
  <
  \frac{y_{i_{r-1}}}{x_{i_r}+y_{i_{r-1}}}
  =
  \alpha_{r-1}.
\]
Thus
\[
  \alpha_r<c_{i_r}<\alpha_{r-1},
\]
with the evident boundary modifications. Hence the integral over the active interval of $i_r$ splits at the kink of $\tau_{i_r}$.

\subsection*{Integral over one active interval}

Fix $r$, and abbreviate
\[
  x=x_{i_r},
  \qquad
  y=y_{i_r},
  \qquad
  a=\alpha_r,
  \qquad
  b=\alpha_{r-1},
  \qquad
  c=\frac{y}{x+y}.
\]
Since $a\le c\le b$, the contribution of $p_{i_r}$ to the lower-envelope integral is
\[
  \int_a^b \tau_{i_r}(w)\,dw
  =
  \int_a^c (1-w)y\,dw
  +
  \int_c^b wx\,dw.
\]
Evaluating the two elementary integrals gives
\[
  \int_a^b \tau_{i_r}(w)\,dw
  =
  y\left(c-\frac{c^2}{2}-a+\frac{a^2}{2}\right)
  +
  x\left(\frac{b^2}{2}-\frac{c^2}{2}\right).
\]
Collecting the terms involving $c$, we obtain
\[
  yc-\frac{(x+y)c^2}{2}.
\]
Since $c=y/(x+y)$, this becomes
\[
  \frac{y^2}{x+y}
  -
  \frac{y^2}{2(x+y)}
  =
  \frac{y^2}{2(x+y)}.
\]
Therefore
\[
  \int_a^b \tau_{i_r}(w)\,dw
  =
  \frac{x b^2}{2}
  -
  ya
  +
  \frac{y a^2}{2}
  +
  \frac{y^2}{2(x+y)}.
\]
This expression is more useful when written in a symmetric boundary form. Observe that
\[
  -ya+\frac{y a^2}{2}+\frac{y}{2}
  =
  \frac{y(1-a)^2}{2},
\]
and
\[
  \frac{y^2}{2(x+y)}-\frac{y}{2}
  =
  -\frac{xy}{2(x+y)}.
\]
Hence
\[
  \int_a^b \tau_{i_r}(w)\,dw
  =
  \frac{x b^2}{2}
  +
  \frac{y(1-a)^2}{2}
  -
  \frac{xy}{2(x+y)}.
\]
Returning to the original notation, this gives
\[
  \int_{\alpha_r}^{\alpha_{r-1}} \tau_{i_r}(w)\,dw
  =
  \frac{x_{i_r}\alpha_{r-1}^2}{2}
  +
  \frac{y_{i_r}(1-\alpha_r)^2}{2}
  -
  A_{i_r i_r}.
\]

\subsection*{Summation and telescoping}

We now sum the active-interval contributions over all selected points:
\[
  R_2(i_1,\ldots,i_k)
  =
  \sum_{r=1}^k
  \left[
    \frac{x_{i_r}\alpha_{r-1}^2}{2}
    +
    \frac{y_{i_r}(1-\alpha_r)^2}{2}
    -
    A_{i_r i_r}
  \right].
\]
The diagonal corrections are already in the desired form:
\[
  -\sum_{r=1}^k A_{i_r i_r}.
\]

It remains to simplify the boundary terms containing the $\alpha_r$. The first boundary value is $\alpha_0=1$, and hence
\[
  \frac{x_{i_1}\alpha_0^2}{2}
  =
  \frac{x_{i_1}}{2}.
\]
The last boundary value is $\alpha_k=0$, and hence
\[
  \frac{y_{i_k}(1-\alpha_k)^2}{2}
  =
  \frac{y_{i_k}}{2}.
\]

Now consider an interior switch between $i_r$ and $i_{r+1}$. The two contributions meeting at this switch are
\[
  \frac{y_{i_r}(1-\alpha_r)^2}{2}
  \quad\text{and}\quad
  \frac{x_{i_{r+1}}\alpha_r^2}{2}.
\]
Using
\[
  \alpha_r
  =
  \frac{y_{i_r}}{x_{i_{r+1}}+y_{i_r}},
  \qquad
  1-\alpha_r
  =
  \frac{x_{i_{r+1}}}{x_{i_{r+1}}+y_{i_r}},
\]
we obtain
\[
  \frac{y_{i_r}(1-\alpha_r)^2}{2}
  +
  \frac{x_{i_{r+1}}\alpha_r^2}{2}
  =
  \frac{
    y_{i_r}x_{i_{r+1}}^2
    +
    x_{i_{r+1}}y_{i_r}^2
  }{
    2(x_{i_{r+1}}+y_{i_r})^2
  }.
\]
Factoring the numerator gives
\[
  \frac{
    x_{i_{r+1}}y_{i_r}
    (x_{i_{r+1}}+y_{i_r})
  }{
    2(x_{i_{r+1}}+y_{i_r})^2
  }
  =
  \frac{x_{i_{r+1}}y_{i_r}}
       {2(x_{i_{r+1}}+y_{i_r})}.
\]
By definition, this is precisely
\[
  A_{i_r i_{r+1}}.
\]

Collecting the two outer boundary terms, the interior switch terms, and the diagonal corrections, we arrive at
\[
  R_2(i_1,\ldots,i_k)
  =
  \frac{x_{i_1}}{2}
  +
  \sum_{r=1}^{k-1} A_{i_r i_{r+1}}
  +
  \frac{y_{i_k}}{2}
  -
  \sum_{r=1}^k A_{i_r i_r}.
\]

This proves the exact adjacent-neighbor decomposition used in the main text. The continuous integral over the lower envelope of the Tchebycheff shadows $\tau_i(w)$ depends, for a sorted selected chain, only on the two boundary points, the selected points through their diagonal corrections, and the adjacent selected pairs. This locality is the structural reason why the Bellman recurrence is exact.

\section{A gentle introduction to matrix search}
\label{app:matrix-search-intro}
Matrix search is a family of algorithms for finding all row minima or all column minima of a matrix faster than by inspecting every entry. The simplest way to view the technique is not as a classical linear algebra method but as an order-exploitation method: if the matrix has enough structure, many entries can be certified irrelevant without being evaluated. The classical reference is the matrix-search algorithm of Aggarwal, Klawe, Moran, Shor, and Wilber, usually referred to by their initials as SMAWK \cite{aggarwal1987smawk}. A broad survey of Monge properties in optimization is given by Burkard, Klinz, and Rudolf \cite{burkard1996monge}.

\subsection{From Monge matrices to monotone minima}

A matrix $C$ is Monge if, for every $i<i'$ and $j<j'$,
\begin{equation}
  C_{ij}+C_{i'j'}\le C_{ij'}+C_{i'j}.
  \label{eq:appendix-monge}
\end{equation}
This inequality says that the north-west plus south-east diagonal of every feasible $2\times2$ submatrix is no larger than the other diagonal. For minimization, this has an important consequence. Suppose row $i'$ is no worse than row $i$ in column $j$, i.e.
\[
  C_{i'j}\le C_{ij}.
\]
Then \eqref{eq:appendix-monge} implies
\[
  C_{i'j'}-C_{ij'}\le C_{i'j}-C_{ij}\le0,
\]
so row $i'$ is still no worse than row $i$ in every later column $j'$. Thus, as one scans columns from left to right, the row index of the uppermost or leftmost column minimum cannot move upwards. This is the column-minimum form of total monotonicity. Rectangular SMAWK exploits this property recursively; the present paper uses a more specialized envelope form because the dynamic-programming matrix is only partially defined.

The distinction between Monge and totally monotone is useful. Monge is a strong $2\times2$ inequality. Total monotonicity is the weaker order condition needed by matrix search: minima remain ordered not only in the whole matrix but also in submatrices. Every Monge matrix is totally monotone, but not every totally monotone matrix is Monge.

\subsection{A small rectangular example}

Consider the matrix $C_{ij}=(i-j)^2$, with rows $i=1,\ldots,4$ and columns $j=1,\ldots,6$:
\[
\begin{array}{c|rrrrrr}
 & c_1 & c_2 & c_3 & c_4 & c_5 & c_6\\
\hline
r_1 & \mathbf{0} & 1 & 4 & 9 & 16 & 25\\
r_2 & 1 & \mathbf{0} & 1 & 4 & 9 & 16\\
r_3 & 4 & 1 & \mathbf{0} & 1 & 4 & 9\\
r_4 & 9 & 4 & 1 & \mathbf{0} & \mathbf{1} & \mathbf{4}
\end{array}
\]
The column-minimizing row indices are
\[
  1,2,3,4,4,4.
\]
They move only downward as the column index increases. For instance, row $3$ beats row $2$ at column $4$, since $1<4$. From that column onward, row $2$ can never again become better than row $3$. This is exactly the kind of dominance that matrix search uses to discard candidates. The algorithm does not need to evaluate all entries in order to know where minima can still occur; it only has to preserve a small ordered set of plausible rows.

The example also illustrates the $2\times2$ Monge inequality. Using rows $2,4$ and columns $3,5$, one obtains
\[
  C_{2,3}+C_{4,5}=1+1=2
  \le
  C_{2,5}+C_{4,3}=9+1=10.
\]
This inequality is the algebraic reason that a later row, once it has become better than an earlier row, remains better in later columns.

\subsection{Why the dynamic-programming matrix is a staircase}

The matrices in the Bellman recurrence are not full rectangles. In layer $r$, column $j$ asks for a predecessor $i<j$, and enough points must have been selected before $i$. Hence the feasible entries are
\[
  \Omega_r=\{(i,j): r-1\le i<j\le n\}.
\]
As $j$ increases, one new predecessor row becomes feasible. The feasible region is therefore a staircase. A tiny schematic example is
\[
\begin{array}{c|ccccc}
 & j=2 & j=3 & j=4 & j=5 & j=6\\
\hline
 i=1 & * & * & * & * & *\\
 i=2 &   & * & * & * & *\\
 i=3 &   &   & * & * & *\\
 i=4 &   &   &   & * & *
\end{array}
\]
where stars denote feasible entries. Padding the blank entries by $+\infty$ is not the cleanest formalization, because artificial infinite ties may obscure the total-monotonicity argument. The right object is a partially defined staircase Monge matrix: every fully feasible $2\times2$ submatrix satisfies the Monge inequality, and feasible row sets are nested prefixes.

\subsection{The envelope interpretation}

For the integral $R_2$ recurrence, each predecessor row $i$ in layer $r$ defines the function
\[
  f_i(x)=D_{r-1}(i)+\frac{x y_i}{2(x+y_i)}.
\]
The value required in column $j$ is the lower envelope value
\[
  \min_{i<j} f_i(x_j).
\]
Because the Pareto-front approximation is sorted with decreasing $y_i$, two such functions cross at most once. Thus the lower envelope is an ordered list of row intervals. The staircase scan performs three simple operations:
\begin{enumerate}
  \item insert the new row $i=j-1$ when it first becomes feasible;
  \item delete previous rows whose future interval of optimality is destroyed by the new row;
  \item query the current envelope at the next increasing abscissa $x_j$.
\end{enumerate}
Each row is inserted once and deleted at most once. The query pointer over the envelope intervals also moves only forward. This is the same order principle as matrix search, specialized to the analytic transition functions of the integral $R_2$ dynamic program.

\subsection{Relation to SMAWK matrix search}

SMAWK is a general-purpose algorithm for rectangular totally monotone matrices under implicit access to entries. It alternates two ideas: eliminate columns that cannot contain a row minimum, and recursively solve a smaller problem before filling the gaps by local search. The algorithm in Section~\ref{sec:matrix-search} does not need the full generality of rectangular SMAWK. Its matrices have a staircase feasible domain and rows enter in the same order as the columns are scanned. Moreover, the entries are generated by single-crossing functions with explicit crossing points. The lower-envelope sweep is therefore a direct, transparent instance of matrix search for this particular recurrence.

The practical message is that direct Bellman recurrence evaluates all feasible predecessor--endpoint pairs in a layer. Divide-and-conquer dynamic programming uses monotone predecessor indices and reduces the search range recursively. Matrix search goes one step further: it keeps only the rows that can still become optimal for some future endpoint. In the bi-objective integral $R_2$ setting, this reduces one layer to linear time while preserving the same exact Bellman recurrence and the same leftmost tie-breaking convention.

\end{document}